\documentclass{article}
\usepackage{graphicx}
\usepackage{amsfonts, amsmath, amsthm, amssymb}
\usepackage{booktabs}
\usepackage{xcolor}
\usepackage{mathtools}
\usepackage{setspace}
\onehalfspacing
\usepackage[left=1in,top=1in,right=1in,bottom=1in]{geometry}
\usepackage{multirow}
\usepackage{algorithm}
\usepackage{algorithmic}
\usepackage[round]{natbib}
\bibliographystyle{plainnat}
\usepackage[colorlinks=true, linkcolor=blue, citecolor=blue, urlcolor=blue]{hyperref}

\newcommand{\RR}[0]{\mathbb{R}}

\newtheorem{theorem}{Theorem}

\newtheorem{lemma}{Lemma}

\title{An Estimator-Robust Design for Augmenting Randomized Controlled Trials with External Real-World Data}
\author{Sky Qiu$^1$, Jens Tarp$^2$, Andrew Mertens$^1$, \\ Mark van der Laan$^1$\\ \\
Division of Biostatistics, University of California, Berkeley$^1$\\
Novo Nordisk, Søborg, Denmark$^2$\\
{\tt sky.qiu@berkeley.edu}}
\date{\today}

\begin{document}
\maketitle
\begin{abstract}
Augmenting randomized controlled trials (RCTs) with external real-world data (RWD) has the potential to improve the finite sample efficiency of treatment effect estimators. We describe using adaptive targeted maximum likelihood estimation (A-TMLE) for estimating the average treatment effect (ATE) by decomposing the ATE estimand into two components: a pooled-ATE estimand that combines data from both the RCT and external sources, and a bias estimand that captures the conditional effect of RCT enrollment on the outcome. This approach views the RCT data as the reference and corrects for inconsistencies of any kind between the RCT and the external data source. Given the growing abundance of external RWD from modern electronic health records, determining the optimal strategy to select candidate external patients for data integration remains an open yet critical problem. In this work, we begin by studying the robustness property of the A-TMLE estimator and then propose a matching-based sampling strategy that attempts to improve the robustness of the estimator with respect to the target estimand. Our proposed strategy is outcome-blind and involves matching based on two one-dimensional scores: the trial enrollment score and the propensity score in the external data. We demonstrate in simulations that our sampling strategy improves the coverage and narrows the widths of confidence intervals produced by A-TMLE. We illustrate our method with a case study of augmenting the DEVOTE cardiovascular safety trial by using the Optum Clinformatics claims database.
\end{abstract}

\section{Introduction}
Randomized controlled trials (RCTs) are widely regarded as the regulatory gold standard for establishing treatment efficacy and safety \citep{franklin_nonrandomized_2020}. However, RCTs often face challenges in achieving adequate statistical power to test clinically meaningful hypotheses for regulatory decision making or scientific purposes, especially in cases where outcome events are rare \citep{chen_challenges_2024}, or for testing hypotheses on secondary endpoints or within subgroups the trial was not powered for. For instance, cardiovascular outcome trials for diabetic medications often target rare endpoints, requiring large patient cohorts and lengthy follow-up periods of three to five years \citep{regier_more_2016}. An example is the DEVOTE trial, which was designed to assess the cardiovascular safety of degludec, an ultralong-acting, once-daily basal insulin approved for various patient groups with diabetes \citep{marso_efficacy_2017}. This double-blind, treat-to-target, event-driven trial aimed to rule out the cardiovascular risk of degludec compared to glargine, with a pre-specified noninferiority hazard ratio margin of 1.3 (a 30\% excess risk). While successful in meeting its noninferiority goal, establishing treatment superiority often requires a separate, larger trial with longer follow-up, which presents both logistical and ethical challenges. Forcing patients to remain on a comparator drug for an extended period, especially when there is already suggestive evidence of the investigational treatment's benefits, raises ethical concerns. Such a design delays patient access to a potentially safer treatment option solely to ensure sufficient statistical power for superiority testing. Meanwhile, the growing availability of external real-world data (RWD), including data from electronic health records, medical claims, and clinical registries, offers a promising alternative. Leveraging external RWD to augment RCTs could facilitate testing for superiority in secondary indications and improve safety analyses for rare outcomes, expand the generalizability of findings to understudied populations, and possibly accelerate the timeline for validating superior treatments for regulatory approvals, allowing faster patient access to potentially safer and more effective medications.

Hybrid designs that use external RWD to augment RCT findings present several challenges. In addition to the primary concern of confounding in the external data, there are inconsistencies in outcome measurement between data sources, non-concurrency, variations in the standard-of-care over time, and other differences between trial and routine care. To mitigate some of these differences, researchers can apply the same (or as much as possible, given the available measured covariates in the real-world data source) inclusion and exclusion criteria as the trial when selecting external patients. However, residual heterogeneity between trial and external real-world data may still persist, such as systematic differences in adherence between controlled trial settings and routine clinical care, which can lead to non-comparable exposure definitions. Even when all effect modifiers for trial enrollment are captured, differences in outcome distributions may still arise due to measurement errors. As an example, adverse outcomes are often adjudicated in trials but some marginal events may be missing in real-world databases \citep{kim_when_2020}, and causes of death may not be well-documented in real-world data, creating discrepancies in endpoint definitions between trial and real-world data \citep{dang_case_2023}. Our previous work addressed these issues by proposing an estimator for the ATE for data integration constructed using the adaptive targeted maximum likelihood estimation (A-TMLE) framework \citep{adaptive_tmle_2024,adml_2023}. This approach decomposes the causal estimand into a pooled estimand and a bias estimand. The pooled estimand, although potentially different from the primary target of interest due to various sources of bias, benefits from a larger sample size due to pooling, making it a more efficient target to estimate. The bias estimand then corrects any biases introduced by pooling the external data, treating the RCT as the reference group. This bias correction allows us to avoid unnecessary assumptions beyond the standard RCT identification assumptions. We demonstrated that A-TMLE could provide more precise estimates with narrower confidence intervals while ensuring asymptotically valid confidence interval coverage \citep{adaptive_tmle_2024}.

Given a data integration estimator of choice such as A-TMLE, how investigators should design a hybrid study and enroll external patients remains an important area of interest. For example, selectively including individuals who responded well to treatment could result in an overestimated treatment effect in the target population. To avoid the risk of cherry-picking favorable outcomes, we believe that the sampling of external patients should be conducted in an outcome-blind manner. Additionally, there are natural instances where outcomes may not yet be measured at the design stage. For example, in prospectively designed hybrid trials that collect future outcomes for external patients, financial constraints may require investigators to select a subset for follow-up and outcome measurement. In such cases, obtaining measurements on all available external patients may be infeasible, which further motivates the need for an outcome-blind sampling strategy. While investigators may follow the trial emulation framework \citep{hernan_target_2022} to mimic a trial as closely as possible, it is unclear how specific design choices impact estimator characteristics in data integration settings.

In this article, we study the robustness structure of the A-TMLE estimator and propose a design strategy motivated by this analysis. Specifically, our sampling strategy includes a two-step matching process. First, each RCT patient is matched with $k$ external patients based on the trial enrollment score, defined as the probability of enrolling in the RCT given pre-treatment covariates. Next, we further match based on the propensity score in the external data, defined as the probability of receiving treatment within the selected external patient cohort. This sampling strategy offers three key benefits. First, it improves the performance of A-TMLE in subsequent data integration analyses by controlling an exact remainder term. In ideal scenarios where matching is performed perfectly, this strategy ensures that A-TMLE remains asymptotically unbiased with respect to the target estimand, even when the outcome regression working models used in constructing A-TMLE are misspecified. Second, our strategy stabilizes the estimator’s variance by directly controlling the inversely weighted terms in the efficient influence curves of the target parameters. Third, simulation evidence suggests that this sampling method produces sub-samples with reduced bias. The bias reduction benefits ``test-then-pool" approaches, where efficiency gains are driven by the bias magnitude. For bias-correction methods like A-TMLE, where gains rely on both bias magnitude and complexity, a smaller bias potentially enables the use of simpler, more parsimonious working models to adequately approximate the bias, resulting in more finite-sample efficiency gains. We illustrate our strategy in a case study that augments the DEVOTE trial with external data from Optum's de-identified Clinformatics Data Mart to study the cardiovascular risks of insulin degludec versus insulin glargine.

This article is organized as follows. Section \ref{sec:preliminaries} provides a concise yet self-contained review of the A-TMLE estimation framework and its application to the data integration setting. In Section \ref{sec:design}, we derive the robustness structure of A-TMLE and propose a practical matching-based design inspired by this analysis. In Section \ref{sec:related_work}, we discuss our work within the context of related research, highlighting some connections and distinctions. In Section \ref{sec:simulations}, we present simulation studies that demonstrate the key advantages of our proposed sampling design. In Section \ref{sec:case_study}, we apply our strategy to select external patients from the Optum database to augment the DEVOTE trial. Finally, we conclude with a discussion in Section \ref{sec:discussion}.

\section{Preliminaries}\label{sec:preliminaries}
In this section, we begin with a brief review of the general A-TMLE estimation framework. Next, we define our data integration problem setup and provide an overview of how A-TMLE is applied in this context. Our focus will be on summarizing the key elements necessary to motivate the proposed design strategy. For a more comprehensive discussion of the A-TMLE framework, we refer readers to \citep{adml_2023}, and for its application in the data integration setting, to \citep{adaptive_tmle_2024}. Table \ref{tab:notations} lists the notations used throughout this article for reference.
\begin{table}[]
\centering
\resizebox{\columnwidth}{!}{
\begin{tabular}{c|c}
\toprule
\textbf{Notation} & \textbf{Description} \\ \midrule
$S$ & RCT indicator \\
$W$ & Patient baseline characteristics \\
$A$ & Treatment indicator \\
$Y$ & Outcome \\ \midrule
$Q_P(S,W,A)=E_P(Y\mid S,W,A)$ & Outcome regression \\
$\bar{Q}_P(W,A)=E_P(Y\mid W,A)$ & Outcome regression, marginalized over $S$ \\
$\theta_P(W)=E_P(Y\mid W)$ & Outcome regression, marginalized over $S$ and $A$ \\
$g_P(a\mid W)=P(A=a\mid W)$ & Treatment mechanism \\
$\Pi_P(s\mid W,A)=P(S=s\mid W,A)$ & RCT enrollment mechanism \\
$\tau_{S,P}(W,A)=E_P(Y\mid S=1,W,A)-E_P(Y\mid S=0,W,A)$ & Conditional average RCT-enrollment effect \\
$\tau_{A,P}(W)=E_P(Y\mid W,A=1)-E_P(Y\mid W,A=0)$ & Conditional average treatment effect (CATE) \\ \midrule
$\Psi^F(P_{O,U})=E_W[E(Y_1-Y_0\mid S=1,W)]$ & Covariate-pooled ATE full-data parameter \\
$\Psi^F_2(P_{O,U})=E_W[E(Y_1-Y_0\mid S=1,W)\mid S=1]$ & RCT-only ATE full-data parameter \\
$\tilde{\Psi}(P_0)=E_0[\tau_{A,0}(W)]$ & Pooled-ATE estimand \\
$\tilde{\Psi}_{\mathcal{M}_{A,w}}(P_0)=E_0[\tau_{A,\beta_0}(W)]$ & Pooled-ATE projection estimand \\
$\Psi^\#(P_0)=E_0[\Pi_0(0\mid W,0)\tau_{S,0}(W,0)-\Pi_0(0\mid W,1)\tau_{S,0}(W,1)]$ & Bias estimand \\
$\Psi^\#_{\mathcal{M}_{S,w}}(P_0)=E_0[\Pi_0(0\mid W,0)\tau_{S,\beta_0}(W,0)-\Pi_0(0\mid W,1)\tau_{S,\beta_0}(W,1)]$ & Bias projection estimand \\
\bottomrule
\end{tabular}
}
\caption{Notations and their descriptions.}
\label{tab:notations}
\end{table}

\subsection{Review of A-TMLE}
Adaptive TMLE (also known as ``adaptive debiased machine learning'') is a statistical estimation framework originally proposed to improve the stability of TMLE in challenging scenarios where there is limited overlap in covariates between the treatment and control groups \citep{adml_2023}. A-TMLE can be viewed as adding a layer of regularization to the targeting step within TMLE, resulting in more stable variance estimates in finite samples. For estimating the average treatment effect (ATE) parameter, the A-TMLE procedure involves the following steps. First, highly adaptive lasso (HAL) is used to learn a data-adaptive working submodel for the conditional average treatment effect (CATE) \citep{vdl_generally_2017,benkeser_hal_2016}. This working submodel implies a projection estimand, defined by plugging in the projection of the true data-generating distribution, $P_0$, onto the learned submodel into the original target parameter mapping. TMLE is then constructed for this projection parameter, along with an efficient influence curve-based Wald-type confidence interval. Inference for the projection parameter extends to inference for the nonparametrically defined target parameter due to the difference between these two parameters being second-order under certain conditions \citep{adml_2023}.

Estimators constructed with A-TMLE are generally more efficient in finite samples. This is because if the tangent space of the oracle model (the model that includes the true distribution, $P_0$), denoted by $\mathcal{M}_0$ at $P_0$, is smaller than that of the nonparametric model $\mathcal{M}$, the Cramér-Rao lower bound for the oracle parameter at $P_0$ is also smaller than for the nonparametrically defined target parameter. Consequently, a $P_0$-efficient estimator for $\Psi_{\mathcal{M}_0}$ tends to be $P_0$-super-efficient for $\Psi$. The limiting variance adapts to the size of the oracle model $\mathcal{M}_0$, adjusting to the complexity of $P_0$. A-TMLE uses a specific loss function, commonly referred to as the $R$-loss, to learn the CATE working model \citep{nie_quasi_2021}. The squared-error projection of the outcome regression onto a semiparametric regression working model is equivalent to a weighted squared-error projection of the CATE function onto its corresponding working model. The weights, $g_P(1-g_P)(1\mid W)$, contribute to variance stabilization. This robustness is particularly valuable in challenging scenarios, such as near violations of the positivity assumption, where methods relying on inverse weighting of the treatment mechanism often produce inflated variance estimates.

\subsection{Data integration problem setup}
We observe $n$ independent and identically distributed observations of the random variable $O=(S,W,A,Y)\sim P_0\in\mathcal{M}$, where $P_0$ is the true data-generating distribution and $\mathcal{M}$ is the statistical model. Here, $S\in\{0,1\}$ indicates whether the patient belongs to the RCT; $W\in\RR^d$ represents a set of baseline covariates; $A\in\{0,1\}$ denotes treatment assignment; and $Y\in\RR$ is the clinical outcome of interest. We consider settings where both the RCT and external data have treatment and control arms. For our statistical model $\mathcal{M}$, we assume full knowledge on the treatment mechanism in the RCT, i.e., we assume that $P_0(A=1\mid S=1,W)$ is known. We assume a structural causal model (SCM) \citep{pearl_causality_2009} defined by: $S=f_S(U_S),W=f_W(S,U_W),A=f_A(S,W,U_A)$, and $Y=f_Y(S,W,A,U_Y)$, where $U=(U_S,U_W,U_A,U_Y)$ is a vector of exogenous errors. We define the potential outcomes $Y_1=f_Y(S,W,A=1,U_Y)$ and $Y_0=f_Y(S,W,A=0,U_Y)$ through intervention on $A$. We consider the covariate-pooled ATE target parameter given by
$$
\Psi(P)=E_W[E(Y_1-Y_0\mid S=1,W)],
$$
which represents the treatment effect within the trial, averaged over the pooled covariate distribution. To identify this target parameter as a functional of the observed data distribution, we make the following three assumptions:
\begin{enumerate}
\item[\textbf{A1}] $(Y_0,Y_1)\perp A\mid S=1,W$ (randomization in the RCT); 
\item[\textbf{A2}] $0<P(A=1\mid S=1,W)<1, P_W\text{-a.e.}$ (positivity of treatment assignment in the RCT);
\item[\textbf{A3}] $P(S=1\mid W)>0,P_W\text{-a.e.}$ (positivity of RCT enrollment in the pooled population).
\end{enumerate}
Note that assumptions A1 and A2 are satisfied in an RCT (or a well-designed observational study that captures all the confounders). Although the plausibility of assumption A3 remains debatable at this point, our proposed design in the next section seeks to make this assumption reasonable. Under assumptions A1, A2, and A3, the target estimand is:
$$
\Psi(P_0)=E_0[E_0(Y\mid S=1,W,A=1)-E_0(Y\mid S=1,W,A=0)],
$$
where we remind the readers again that the outer expectation is taken over the covariate distribution of the pooled population. Here, the subscript 0 in $E_0$ means that the expectation is taken under the probability distribution $P_0$. Note that the consistency assumption \citep{rubin_estimating_1974}, which states that $Y_a$ when $A=a$ for $a\in\mathcal{A}$ given $S=1$, is implied by our SCM.

\subsection{Review of A-TMLE for data integration}
In the context of data integration, A-TMLE could be used to gain efficiency when estimating average treatment effects. Applying A-TMLE to this setting involves the following. First, a parameter is proposed that may differ from the primary target parameter of interest but allows full use of the pooled data from RCT and RWD. For data integration, a natural choice is the pooled-ATE estimand, defined as:
$$
\tilde{\Psi}(P_0)=E_0[E_0(Y\mid W,A=1)-E_0(Y\mid W,A=0)].
$$
Note that the identification of $\tilde{\Psi}$ requires the stronger assumption that the treatment effect within the trial can be extended to the pooled population, implying that all the effect modification variables have been captured, which may not hold in practice. As a result, $\tilde{\Psi}$ may induce bias relative to the actual target estimand $\Psi(P_0)$. The induced bias could be defined as:
$$
\Psi^\#(P_0)\equiv \tilde{\Psi}(P_0)-\Psi(P_0).
$$
Our previous work \citep{adaptive_tmle_2024} showed that this bias estimand has an analytic form:
$$
\Psi^\#(P_0)=E_0[\Pi_0(0\mid W,0)\tau_{S,0}(W,0)-\Pi_0(0\mid W,1)\tau_{S,0}(W,1)].
$$
Thus, constructing an A-TMLE for $\Psi(P_0)$ involves applying the A-TMLE estimator twice: once for the pooled-ATE estimand $\tilde{\Psi}(P_0)$ and once for the bias estimand $\Psi^\#(P_0)$. This procedure can also be viewed as producing a bias-corrected estimate of $\tilde{\Psi}(P_0)$. Here, the working model for $\tilde{\Psi}$ is the conditional average treatment effect (CATE), whereas for $\Psi^\#$, it is the conditional average RCT-enrollment effect. These two working models imply a working model $\mathcal{M}_w$ for $\mathcal{M}$, where the subscript $w$ denotes the ``working'' model. 
The projection estimand can then be expressed as:
$$
\Psi_{\mathcal{M}_w}(P_0)=\tilde{\Psi}_{\mathcal{M}_{A,w}}(P_0)-\Psi^\#_{\mathcal{M}_{S,w}}(P_0),
$$
where the subscripts $A$ and $S$ emphasize the respective working models for the conditional average effect of $A$ on $Y$ and $S$ on $Y$.

We chose this particular decomposition of the target estimand in part because it allows us to leverage the potentially simple structure of the bias model. Specifically, the efficiency gain of A-TMLE in this setting, as compared to a nonparametric TMLE, stems from the adaptability in learning the bias working model. It may be reasonable to suspect that the conditional effect of the study indicator $S$ on the outcome $Y$ might be relatively simple or small, given that substantial bias would likely discourage data integration in the first place. Even if the true bias were fully nonparametric, substantial efficiency gains in finite samples are still possible. Additionally, when the bias structure is complex, its magnitude could be small. This is because quality checks of external data sources were often performed to ensure they are high quality and fit-for-purpose, filtering out data sources with large biases. Regardless, it remains essential to reduce heavy dependence on correct specification of the bias working model. Therefore, it is essential to analyze the estimator and understand the conditions under which it remains unbiased, even if the bias working model (and/or the CATE working model) is misspecified.

\section{An Estimator-robust Design}\label{sec:design}
In this section, we investigate the robustness structure of A-TMLE. In the simple point-treatment ATE setting, a regular TMLE is known to be double-robust, meaning that the resulting TMLE estimator will be asymptotically unbiased if either the outcome regression model or the propensity score model is correctly specified \citep{vdl_rose_2011}. This property is particularly desirable in RCTs, where the treatment mechanism is randomized and hence known. Such robustness analysis can be done for our A-TMLE. Specifically, we examine the exact remainder term of the target parameter to identify which components of the likelihood must be correctly specified to ensure asymptotic unbiasedness of the A-TMLE estimator. A natural question is whether this robustness can be exploited to design a sampling scheme for external patients that preserves the unbiasedness of the A-TMLE estimator. We propose a matching-based sampling strategy tailored to achieve this.

\subsection{Robustness properties of A-TMLE for data integration}\label{subsec:robustness}
First, we review the conditions necessary to establish the asymptotic efficiency of a regular estimator. Suppose $\Psi(P_n^\star)$ is an estimator of the target estimand $\Psi(P_0)$. For target parameters that are pathwise differentiable at $P_0$, we have the following expansion:
$$
\Psi(P_n^\star)-\Psi(P_0)=-P_0D^\star(P_n^\star)+R(P_n^\star,P_0),
$$
where $D^\star(P)$ is the efficient influence curve of $\Psi$ at $P$ and $R(P,P_0)$ is the exact remainder \citep{vdl_rose_2011}. If $\Psi(P_n^\star)$ is a TMLE, it solves the score equation $P_nD^\star(P_n^\star)=o_P(n^{-1/2})$. Therefore, we could rewrite the expansion as:
$$
\Psi(P_n^\star)-\Psi(P_0)=(P_n-P_0)D^\star(P_n^\star)+R(P_n^\star,P_0)+o_P(n^{-1/2}).
$$
The term $(P_n-P_0)D^\star(P_n^\star)$ can be controlled by an asymptotic equicontinuity theorem from empirical process theory. Specifically, assume that $D^\star(P_n^\star)$ belongs to a $P_0$-Donsker class with probability tending to one, and if $P_0\{D^\star(P_n^\star)-D^\star(P_0)\}^2$ converges in probability to zero, then $(P_n-P_0)\{D^\star(P_n^\star)-D^\star(P_0)\}=o_P(n^{-1/2})$ \citep{vaart_weak_1997}. As a result, the expansion becomes
$$
\Psi(P_n^\star)-\Psi(P_0)=P_nD^\star(P_0)+R(P_n^\star,P_0)+o_P(n^{-1/2}).
$$
For $\Psi(P_n^\star)$ to be asymptotically linear and efficient, one should be able to express the estimator minus the truth as the empirical mean of the efficient influence curve $D^\star(P_0)$ plus a term that converges in probability to zero at a rate faster than $\sqrt{n}$. Therefore, the critical requirement for asymptotic linearity and efficiency is that the exact remainder $R(P_n^\star,P_0)$ must be $o_P(n^{-1/2})$. To understand the robustness structure of an estimator, it is therefore essential to analyze this exact remainder. Understanding the exact remainder typically involves first finding the canonical gradient of $\Psi$ at $P\in\mathcal{M}$ and computing the exact remainder based on its definition given by
$$
R(P,P_0)\equiv \Psi(P)-\Psi(P_0)+P_0D^\star(P).
$$
Target parameters that admit, for example, double-robust estimation often have an exact remainder that takes the form of an integral of a product of two differences. Each difference is an approximation error that involves the estimate minus its true value for specific components of the likelihood. If one of these components is fully known and correctly specified, making one of the differences zero, the exact remainder becomes zero, thereby guaranteeing asymptotic unbiasedness of the estimator. We follow this strategy to analyze A-TMLE for our target parameter. Lemma \ref{lem:param_unbiased} presents the exact remainders for the pooled-ATE and bias projection estimands.
\begin{lemma}\label{lem:param_unbiased}
The exact remainder $\tilde{R}_{\mathcal{M}_{A,w}}(P,P_0)\equiv \tilde{\Psi}_{\mathcal{M}_{A,w}}(P)-\tilde{\Psi}_{\mathcal{M}_{A,w}}(P_0) + P_0 D_{\mathcal{M}_{A,w}}(P)
$ for the pooled-ATE projection estimand $\tilde{\Psi}_{\mathcal{M}_{A,w}}(P_0)$ is given by
\begin{align*}
\tilde{R}_{\mathcal{M}_{A,w}}(P,P_0)&=\sum_jE_P\phi_j(W)\bigg\{\tilde{I}_{P}^{-1}
P_0\big[(g_P-g_0)(1\mid W)\phi(W)(\theta_P-\theta_0)(W)\\
&+(g_P-g_0)(1-g_P)(1\mid W)\phi(W)(\tau_{A,\beta_P}-\tau_{A,\beta_0})(W)\\
&-(g_P-g_0)^2(1\mid W)\phi(W)\tau_{A,\beta_P}(W)\\
&-(g_P-g_0)(1\mid W)\phi(W)g_0(1\mid W)(\tau_{A,\beta_P}-\tau_{A,\beta_0})(W)\big]\bigg\}_j,
\end{align*}
where $\tilde{I}_P=E_Pg_P(1-g_P)(1\mid W)\phi\phi^\top(W)$. 

The exact remainder $R^\#_{\mathcal{M}_{S,w}}(P,P_0) \equiv \Psi^\#_{\mathcal{M}_{S,w}}(P) - \Psi^\#_{\mathcal{M}_{S,w}}(P_0) + P_0 D_{\mathcal{M}_{S,w}}(P)$ for the bias projection estimand $\Psi^\#_{\mathcal{M}_{S,w}}(P_0)$ is given by
\begin{align*}
R^\#_{\mathcal{M}_{S,w}}(P,P_0)&=P_0\bigg\{\frac{g_P-g_0}{g_P}(1\mid W)\tau_{S,\beta_{P}}(W,1)(\Pi_P-\Pi_0)(1\mid W,1)\\
&-\frac{g_P-g_0}{g_P}(0\mid W)\tau_{S,\beta_P}(W,0)(\Pi_P-\Pi_0)(1\mid W,0)\\
&-(\tau_{S,\beta_P}-\tau_{S,\beta_0})(W,0)(\Pi_P-\Pi_0)(0\mid W,0)\\
&+(\tau_{S,\beta_P}-\tau_{S,\beta_0})(W,1)(\Pi_P-\Pi_0)(0\mid W,1)\\
&+\sum_jE_P[\Pi_P(0\mid W,0)\phi_j(W,0)-\Pi_P(0\mid W,1)\phi_j(W,1)]\bigg[I_{P}^{-1}\cdot\\
&\big[(\Pi_P-\Pi_0)(1\mid W,A)\phi(W,A)(\bar{Q}_P-\bar{Q}_0)(W,A)\\
&+(\Pi_P-\Pi_0)(1-\Pi_P)(1\mid W,A)\phi(W,A)(\tau_{S,\beta_P}-\tau_{S,\beta_0})(W,A)\\
&-(\Pi_P-\Pi_0)^2(1\mid W,A)\phi(W,A)\tau_{S,\beta_P}(W,A)\\
&-(\Pi_P-\Pi_0)(1\mid W,A)\phi(W,A)\Pi_0(1\mid W,A)(\tau_{S,\beta_P}-\tau_{S,\beta_0})(W,A)\big]\bigg]_j\bigg\},
\end{align*}
where $I_P=E_P\Pi_P(1-\Pi_P)(1\mid W,A)\phi\phi^\top(W,A)$.
\end{lemma}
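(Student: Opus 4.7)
The plan is to derive each exact remainder directly from its definition $R(P,P_0)\equiv \Psi(P)-\Psi(P_0)+P_0 D^\star(P)$, which requires first obtaining the canonical gradient $D^\star(P)$ of each projection estimand and then performing a carefully ordered sequence of component swaps. Both projection parameters are defined through a weighted R-loss projection of the true conditional effect onto a linear working model $\tau_{\beta}(\cdot)=\beta^\top\phi(\cdot)$, so the induced $\beta_P$ is characterized by a normal equation of the form $E_P\,\omega_P\,\phi\,(\tau_{0}-\tau_{\beta_P})=0$, with weight $\omega_P=g_P(1-g_P)(1\mid W)$ in the pooled-ATE case and $\omega_P=\Pi_P(1-\Pi_P)(1\mid W,A)$ in the bias case. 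Applying the implicit function theorem to this normal equation, combined with the canonical gradient of the outer functional, yields the standard A-TMLE gradient derived in \cite{adml_2023,adaptive_tmle_2024}: a centered working-outcome term plus a Gram-inverse-weighted R-loss residual involving $\tilde{I}_P^{-1}$ (respectively $I_P^{-1}$).

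For the pooled-ATE projection estimand, I would decompose the difference as $\tilde{\Psi}_{\mathcal{M}_{A,w}}(P)-\tilde{\Psi}_{\mathcal{M}_{A,w}}(P_0)=(E_P-E_0)\tau_{A,\beta_P}(W)+E_0(\tau_{A,\beta_P}-\tau_{A,\beta_0})(W)$. The first piece cancels against the centering part of $P_0 D_{\mathcal{M}_{A,w}}(P)$, leaving only the $\tilde{I}_P^{-1}$-weighted R-loss residual evaluated at $P_0$. For the second piece I would use the normal equations at $P$ and at $P_0$ to re-express $\beta_P-\beta_0$ as $\tilde{I}_P^{-1}E_0\{\omega_0\phi(\tau_{A,0}-\tau_{A,\beta_0})-\omega_P\phi(\tau_{A,0}-\tau_{A,\beta_P})\}$, then expand using the telescoping identity $\omega_P-\omega_0=(g_P-g_0)(1-g_P)-(g_P-g_0)g_0$ together with the outcome-regression identity relating $\theta_P-\theta_0$ to $E_0[Y-\theta_P\mid W]$. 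Each of the four bracketed terms in the lemma then has a transparent source: $(g_P-g_0)\phi(\theta_P-\theta_0)$ from the outcome-regression residual, $(g_P-g_0)(1-g_P)\phi(\tau_{A,\beta_P}-\tau_{A,\beta_0})$ and $-(g_P-g_0)g_0\phi(\tau_{A,\beta_P}-\tau_{A,\beta_0})$ from the $\omega$-swap acting on $(\tau_{A,\beta_P}-\tau_{A,\beta_0})$, and $-(g_P-g_0)^2\phi\,\tau_{A,\beta_P}$ from the $\omega$-swap acting on $\tau_{A,\beta_P}$ itself.

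For the bias projection estimand the same strategy applies, but the outer functional $E_P[\Pi_P(0\mid W,0)\tau_{S,\beta_P}(W,0)-\Pi_P(0\mid W,1)\tau_{S,\beta_P}(W,1)]$ depends on $\Pi_P$ both inside the outer weight and through $\beta_P$, so I would swap one component at a time. First, swapping $\Pi_P\to\Pi_0$ in the outer weight $\Pi_P(0\mid W,a)$ and using the factorization $dP=g_P(A\mid W)\Pi_P(S\mid W,A)\,dP_W$ to route the canonical gradient with respect to $\Pi$ through the treatment mechanism produces the leading $1/g_P(a\mid W)$ factor in the first four bracketed terms of $R^\#_{\mathcal{M}_{S,w}}$; the paired signs at $A=0$ and $A=1$ come from the outer subtraction. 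Second, swapping $\beta_P\to\beta_0$ inside $\tau_{S,\beta_P}$ is the exact analogue of the pooled-ATE calculation above, now with $(\Pi,\bar{Q},\tau_S)$ in place of $(g,\theta,\tau_A)$ and Gram matrix $I_P$ in place of $\tilde{I}_P$, and produces the final four $\phi$-weighted bracketed terms.

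The main obstacle is not conceptual but algebraic bookkeeping: every first-order piece of $\Psi(P)-\Psi(P_0)$ must cancel exactly against the corresponding piece of $P_0 D^\star(P)$ so that only the advertised second-order products of differences survive, and in the bias case the two swaps interact because the outer $\Pi_P$-weight and the inner $\Pi_P(1-\Pi_P)$ Gram weight are tied to the same mechanism. The key tool throughout is the telescoping identity $\omega_P-\omega_0=(g_P-g_0)(1-g_P)-(g_P-g_0)g_0$ (and its $\Pi$-analogue), which is what forces each intermediate swap to reduce to a clean product of two differences. No new machinery beyond the implicit function theorem applied to the R-loss normal equations and this identity is required.
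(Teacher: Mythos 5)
Your proposal is correct and follows essentially the same route as the paper's Appendix A derivation: you start from the canonical gradients implied by the $R$-loss normal equations (which the paper cites from its prior work rather than re-deriving via the implicit function theorem), substitute the semiparametric representations $\bar{Q}_0=\theta_0+(A-g_0)\tau_{A,\beta_0}$ and $Q_0=\bar{Q}_0+(S-\Pi_0)\tau_{S,\beta_0}$ into $P_0D^\star(P)$, and use the same telescoping of the weight differences (the paper's $\mathbf{A},\mathbf{B},\mathbf{C}$ split) so that the first-order pieces cancel and only the stated products of differences remain. The only differences are organizational (you cancel $(E_P-E_0)\tau_{A,\beta_P}$ against the centering term up front and route the remaining first-order piece through $\beta_P-\beta_0$, while the paper extracts it from the $\mathbf{B}_2$ term), not substantive.
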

We provide derivations in Appendix \ref{app_R2}. Additionally, note that knowledge of the trial enrollment score $P_0(S=1\mid W)$ and the propensity score in the external data $P_0(A=1\mid S=0,W)$ implies $g_0$ and $\Pi_0$. Specifically, notice that
$$
\Pi(1\mid W,a)=\frac{P(A=a\mid S=1,W)P(S=1\mid W)}{P(A=a\mid S=1,W)P(S=1\mid W)+P(A=a\mid S=0,W)P(S=0\mid W)}.
$$
Due to the treatment assignment in the trial (i.e., the $S=1$-study) being randomized, the probability $P(A=1 \mid S=1, W)$ corresponds to the known randomization probability. Therefore, with additional knowledge on $P(S=1\mid W)$ and $P(A=1 \mid S=0, W)$, one can fully determine $\Pi(1 \mid W, A)$. Moreover, the marginal probability of treatment across both the trial and the external data is given by:
$$
g(1\mid W)=P(A=1\mid S=1,W)P(S=1\mid W)+P(A=1\mid S=0,W)P(S=0\mid W),
$$
which again relies on $P(S=1\mid W)$ and $P(A=1\mid S=0,W)$. Combining these results with Lemma \ref{lem:param_unbiased}, we establish Theorem \ref{thm:robust}, which describes conditions under which the A-TMLE estimator is asymptotically unbiased for the projection estimand $\Psi_{\mathcal{M}_{w}}(P_0)$ and the nonparametrically defined estimand $\Psi(P_0)$, respectively.
\begin{theorem}\label{thm:robust}
If the trial enrollment score $P_0(S=1\mid W)$ and the propensity score in the external data $P_0(A=1\mid S=0,W)$ are correctly specified, then the exact remainder $R_{\mathcal{M}_{w}}(P,P_0)\equiv \Psi_{\mathcal{M}_{w}}(P) - \Psi_{\mathcal{M}_{w}}(P_0) + P_0 D_{\mathcal{M}_{w},P}=0$. Further, if those two scores are constant in $W$, then the oracle bias $\Psi_{\mathcal{M}_w}(P_0)-\Psi(P_0)=0$. In the special case of augmenting with external controls only (i.e., no external treatment arm), $P_0(A=1\mid S=0,W)=0$, which is already a constant. Therefore, in this setting, the only condition required for both the exact remainder for the projection estimand and the oracle bias to be zero is that $P_0(S=1\mid W)$ must also be constant in $W$.
\end{theorem}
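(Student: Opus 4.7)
The plan is to build on Lemma~\ref{lem:param_unbiased} together with the observation, made just before the theorem, that knowledge of $P_0(S=1\mid W)$ and $P_0(A=1\mid S=0,W)$ combined with the design-known RCT treatment mechanism $P_0(A=1\mid S=1,W)$ pins down both $g_0(1\mid W)$ and $\Pi_0(1\mid W,A)$ via the displayed identities. I would then treat the three claims in turn, using the decomposition $R_{\mathcal{M}_{w}}(P,P_0)=\tilde{R}_{\mathcal{M}_{A,w}}(P,P_0)-R^\#_{\mathcal{M}_{S,w}}(P,P_0)$ induced by $\Psi_{\mathcal{M}_w}=\tilde{\Psi}_{\mathcal{M}_{A,w}}-\Psi^\#_{\mathcal{M}_{S,w}}$.

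For the first claim, I would simply inspect the two remainder expressions given in Lemma~\ref{lem:param_unbiased} term by term. Every summand of $\tilde{R}_{\mathcal{M}_{A,w}}$ carries the factor $(g_P-g_0)(1\mid W)$, and every summand of $R^\#_{\mathcal{M}_{S,w}}$ carries either $(g_P-g_0)$ or $(\Pi_P-\Pi_0)$ as a multiplicative factor. Correct specification of the two scores therefore forces $g_P=g_0$ and $\Pi_P=\Pi_0$ $P_0$-almost everywhere, so each integrand is identically zero and the combined exact remainder vanishes.

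For the second claim, I would exploit that constancy of the two scores in $W$ renders the R-loss weights constant in $W$. For the pooled-ATE projection, $g_0(1-g_0)(1\mid W)$ is then a constant, and testing the first-order optimality condition for $\tau_{A,\beta_0}$ against the intercept direction of $\mathcal{M}_{A,w}$ reduces to $E_0[\tau_{A,\beta_0}(W)-\tau_{A,0}(W)]=0$, giving $\tilde{\Psi}_{\mathcal{M}_{A,w}}(P_0)=\tilde{\Psi}(P_0)$. For the bias projection, $\Pi_0(1-\Pi_0)(1\mid W,A)$ depends only on $A$; testing the first-order conditions against the $A$-stratified intercept directions in $\mathcal{M}_{S,w}$ yields $E_0[\tau_{S,\beta_0}(W,a)]=E_0[\tau_{S,0}(W,a)]$ for $a\in\{0,1\}$. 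Since $\Pi_0(0\mid W,a)$ is then constant in $W$, it can be pulled outside the outer expectation in both $\Psi^\#(P_0)$ and $\Psi^\#_{\mathcal{M}_{S,w}}(P_0)$, and the two expressions agree. Subtracting the CATE and bias identities yields the oracle bias is zero.

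The third claim is then an immediate corollary: in the external-controls-only setting, $P_0(A=1\mid S=0,W)\equiv 0$ is automatically constant, so the joint hypothesis collapses to the single requirement that $P_0(S=1\mid W)$ be constant in $W$. The main obstacle is the second step: the argument implicitly requires $\mathcal{M}_{A,w}$ to span the constant function in $W$ and $\mathcal{M}_{S,w}$ to span the two $A$-stratified intercept directions. This is a mild structural requirement that is standard for HAL-based parameterizations, but I would flag it explicitly as a hypothesis on the working models rather than leave it implicit.
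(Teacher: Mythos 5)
Your proposal is correct and takes essentially the same route as the paper: the exact remainder vanishes because every term in the two expressions of Lemma~\ref{lem:param_unbiased} carries a $(g_P-g_0)$ or $(\Pi_P-\Pi_0)$ factor, and the oracle bias vanishes by the orthogonality of the weighted $L^2$ projection residual to intercept-type basis directions when the weights $g_0(1-g_0)(1\mid W)$ and $\Pi_0(1-\Pi_0)(1\mid W,A)$ are constant in $W$, with the external-controls-only case as an immediate corollary. Your explicit flagging of the requirement that $\mathcal{M}_{A,w}$ and $\mathcal{M}_{S,w}$ span the constant (respectively $A$-stratified constant) directions simply makes precise what the paper uses implicitly via the all-ones basis function, so it is a clarification rather than a departure.
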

\begin{proof}
Given $P_0(S=1\mid W)$ and $P_0(A=1\mid S=0,W)$, we can compute both $g_0(1\mid W)$ and $\Pi_0(1\mid W,A)$. From Lemma \ref{lem:param_unbiased}, we observe that if $g_P=g_0$ and $\Pi_P=\Pi_0$, then the exact remainders $\tilde{R}_{\mathcal{M}_{A,w}}(P,P_0)=0$ and $R^\#_{\mathcal{M}_{S,w}}(P,P_0)=0$. We then show that when both $g_0$ and $\Pi_0$ are constant in $W$, the oracle bias is zero. First, consider the oracle bias for the pooled-ATE estimand, given by  
$$
\tilde{\Psi}_{\mathcal{M}_{A,w}}(P_0) - \tilde{\Psi}(P_0) = E_0[(\tau_{A,\beta_0} - \tau_0)(W)].
$$  
By the definition of the projection parameter,
$$
\beta_{0} = \arg\min_{\beta} E_0 g_0(1 - g_0)(1 \mid W) [(\tau_{A,\beta} - \tau_0)(W)]^2.
$$  
This suggests that $(\tau_{A,\beta_0} - \tau_0)(W)$ is orthogonal to any basis function in $\phi(W)$, including the all-ones vector (intercept term), with respect to an inner product using weights $g_0(1 - g_0)(1 \mid W)$. Since by our assumption, $g_0(1-g_0)(1\mid W)$ is constant in $W$, we have $E_0[(\tau_{A,\beta_0}-\tau_0)(W)]=0$, assuming $g_0(1-g_0)(1\mid W)$ is bounded away from zero. By a similar argument, for $\Psi^\#$, the oracle bias is also zero due to $\Pi_0(0 \mid W, A)$ being constant in $W$. 
\end{proof}
Note that in A-TMLE, we decompose our target estimand into two parts, the pooled-ATE estimand $\tilde{\Psi}(P_0)$ and the bias estimand $\Psi^\#(P_0)$. The estimation of the pooled-ATE estimand relies on a working model $\mathcal{T}_{A,w}=\{\sum_j\beta(j)\phi_{j}(W):\beta\}$ for the CATE function, defined as $\tau_{A,0}(W)=E_0(Y\mid W,A=1)-E_0(Y\mid W,A=0)$. The bias estimand, on the other hand, involves a working model $\mathcal{T}_{S,w}=\{\sum_j\beta(j)\phi_{j}(W,A):\beta\}$ for the conditional average RCT-enrollment effect, $\tau_{S,0}(W,A)=E_0(Y\mid S=1,W,A)-E_0(Y\mid S=0,W,A)$. Theorem \ref{thm:robust} suggests that with correctly specified models for the trial enrollment score and propensity score in the external data, the A-TMLE estimator will be asymptotically unbiased with respect to the projection estimand $\Psi_{\mathcal{M}_w}(P_0)$, even if the two working models $\tau_{A,\beta}$ and $\tau_{S,\beta}$ are misspecified. Furthermore, if these two mechanisms are constant probabilities, the asymptotic unbiasedness of A-TMLE extends to the nonparametrically defined target estimand, which is of primary interest. 

Practically, although it may be challenging to arrange by design so that one has full knowledge on those two factors, having this robustness structure is still beneficial. For instance, consider the exact remainder for the projection parameter $\tilde{\Psi}_{\mathcal{M}_{A,w}}$, which involves the integral of the product of two approximation errors $(g_P-g_0)(1\mid W)(\tau_{A,\beta_P}-\tau_{A,0})(W)$ plus another two approximation errors $(g_P-g_0)(1\mid W)(\theta_P-\theta_0)(W)$. By the Cauchy-Schwarz inequality, the first product can be bounded by the product of $||\bar{g}_P-\bar{g}_0||$ and $||\tau_{A,\beta_P}-\tau_{A,0}||$, where $\bar{g}_P$ denotes the function $g_P(1\mid W)$ and $||\cdot||$ is the $L_2(P)$-norm. Similarly, for the other product of two differences $(g_P-g_0)(1\mid W)(\theta_P-\theta_0)(W)$, one could bound it by the product of $||\bar{g}_P-\bar{g}_0||$ and $||\theta_P-\theta_0||$. To establish asymptotic efficiency, the exact remainder must satisfy $o_P(n^{-1/2})$. This implies that each difference in the product needs to be $o_P(n^{-1/4})$. However, if one component can be estimated faster than $n^{-1/4}$, it allows for a slower rate for the other component, as long as their product remains $o_P(n^{-1/2})$. This property is particularly appealing when, for example, $g_0(1\mid W)$ can be estimated at a rate faster than $n^{-1/4}$. In such cases, a slower rate for estimating the CATE function $\tau_{A,0}(W)$ and $\theta_P(W)$ is permitted, as long as the product of their errors remains $o_P(n^{-1/2})$. The projection parameter $\Psi^\#_{\mathcal{M}_{S,w}}$ has a similar story, if $\Pi_0(1\mid W,A)$ can be estimated at a rate faster than $n^{-1/4}$, then the conditional average RCT-enrollment effect $\tau_{S,0}(W,A)$ and $\bar{Q}_0(W,A)$ could be estimated at slower rates than $n^{-1/4}$, provided that the product of their errors is $o_P(n^{-1/2})$. It is also important to note that these two scores are outcome-blind, allowing us to control them without looking at outcomes.

To summarize, in the most ideal scenario, both the trial enrollment score and the propensity score in the external data are fully known and constant in $W$. Under these conditions, even if the working models for the CATE function and the conditional average RCT-enrollment effect are misspecified, A-TMLE remains asymptotically unbiased for the nonparametrically defined target parameter. This is due to the exact remainder term being zero. In scenarios where the functional forms of the trial enrollment score and the propensity score are fully known but depend on $W$, A-TMLE is still guaranteed to be asymptotically unbiased with respect to the projection parameter. This parameter is defined as plugging in the projection of the true data-generating distribution onto a data-adaptively learned working model, such as one constructed using HAL. As we will demonstrate later in this section, numerical evaluations of the oracle bias, defined as the difference between the projection parameter and the nonparametrically defined target parameter, suggest that the oracle bias tends to be small in finite samples under various degrees of model misspecification. In the least ideal scenario, neither the trial enrollment score nor the propensity score is known to the investigator. This is the challenge we aim to address with our proposed design strategy, introduced in the next subsection. Specifically, we propose a method for selecting external patients in a way to improve control over the trial enrollment and propensity scores in the sampled subset of external data. Faster convergence of these scores would allow for slower convergence of the working models while still ensuring that the exact remainder term is $o_P(n^{-1/2})$.

\subsection{Robustness structure inspired sampling of external patients}\label{subsec:sampling}
While it may not always be possible to have pre-existing knowledge on the trial enrollment score in the pooled data and the propensity score in the external data, it may be feasible to come up with a sampling strategy such that a subsample of external patients, when combined with the RCT, would make these scores constants (or, at least, close to constants), so that one could exploit the robustness structure of A-TMLE to obtain a more robust estimator. Specifically, we propose starting with an initial filtering of external patients based on the trial’s inclusion and exclusion criteria. This step ensures that the identification assumption A3, which requires that each individual in the pooled population has a non-zero probability of trial participation, is made more plausible. After this step, some dependence of trial enrollment on patient baseline covariates may still remain, if, for example, the RCT sample and RWD sample are not drawn i.i.d from the trial-eligible population. To further address this, we propose matching each subject in the RCT with $k>1$ external real-world subjects.  Matching is a widely used technique in observational studies to balance covariate distributions between groups, thereby reducing bias from covariates \citep{rosenbaum_ps_1983, stuart_matching_2010}. Here, the goal of matching is to arrange so that $S$ is approximately independent of $W$ within the matched sample of RCT and external RWD, that is, $P(S=1\mid W)\approx 1/(1+k)$. If only external control subjects are available, this step concludes the matching process. However, if augmentation with external treated subjects is also desired, we propose an additional step that further matches each selected external treated patient with $m>0$ external control patients within the previously matched cohort. This ensures that within the selected external cohort, treatment assignment is roughly independent of covariates with $P(A=1\mid S=0,W)\approx 1/(1+m)$. 

Ideally, exact matching on all baseline covariates would achieve the desired objectives. In particular, if both treatment and trial enrollment are ignorable given the observed baseline covariates, and the sample size of external RWD is large enough to find perfect matches, the trial enrollment score in the pooled sample and the propensity score within the selected external cohort could be made constant in $W$, guaranteeing the asymptotic unbiasedness of A-TMLE with respect to the nonparametrically defined target estimand. However, exact matching is often impractical when covariates are high-dimensional or continuous. In such cases, alternative distance measures for matching can be adopted. For instance, one may perform propensity score matching. Specifically, one may first match each RCT participant with $k>1$ external patients using the one-dimensional trial enrollment score $P(S=1\mid W)$. After that, the propensity score within the external subset (obtained from trial enrollment score matching) $P(A=1\mid S=0,W)$ is estimated and used as a one-dimensional score for propensity score matching. Even if the trial enrollment score or propensity score models are misspecified and/or matches are inexact, there remains the possibility that the trial enrollment and propensity score post-matching fall within a smaller class of functions. This could potentially allow for faster-than-$n^{-1/4}$ rates of estimation for the trial enrollment and propensity score, allowing for possibly slower rates of estimation for the CATE and conditional average RCT-enrollment effect.

We make the following remarks regarding applying the matching design to select external data in practical settings. First, in certain practical scenarios, there may be a much greater number of external controls available compared to treated patients. This situation often arises when the drug has been recently approved at the time of data integration analysis, leading to limited real-world exposure to treatment. In such cases, we recommend in the propensity score matching step to match each treated patient with multiple external controls to maximize the external cohort's sample size for augmentation. In our case study, we indeed observed a smaller number of treated individuals in the RWD compared to controls. This imbalance was due to the active treatment, insulin degludec, being on the market for a shorter time than the comparator drug. We therefore chose $m=5$, that is, we used 1:5 matching during the propensity score matching step, ensuring that the external sample size was not constrained by the relatively limited number of external treated individuals.

Second, the interpretation of the final estimate should be considered when selecting individuals for sampling. Different matching specifications, such as the choice of distance measure, can result in varied patient profiles in the pooled sample. Researchers should ensure that the resulting population is of clinical relevance. Specifically, note that our target estimand $\Psi(P_0)$ takes the expectation over the pooled covariate distribution. Therefore, the interpretation applies to a pooled RCT and external RWD population, assumed to be drawn independently and identically distributed. One also has the option to modify the target estimand to the sample average treatment effect, so the interpretation is limited to the selected pooled sample. This may also improve the efficiency of A-TMLE further by removing the contribution from the covariate distribution in the variance estimate.

Third, in cases where certain patient groups are underrepresented in the RCT, incorporating external data may help the pooled sample better reflect the patient distributions in the target population. In these situations where the goal is to improve the generalizability of RCT findings, it may be preferable not to set the trial enrollment probability independent of $W$, as doing so would likely result in a sample that still resembles a skewed RCT population. Instead, one could sample external patients in a way so that the trial enrollment score remains a known function of $W$ rather than a constant, so that one still preserves asymptotic unbiasedness with respect to the projection parameter. Robustness with respect to the projection parameter itself is valuable for several reasons. First, in much of the existing literature, target parameters are often defined based on a pre-specified parametric working model. These parametric models are subject to bias introduced by model misspecifications, which will also reduce the interpretability of the final estimate. In contrast, our projection parameter is defined on a data-adaptively learned submodel, which would grow with sample size. Specifically, our working model could be learned using HAL, a flexible nonparametric regression method able to approximate any function with a bounded variation norm, which includes most functions encountered in health domains. Target parameters defined through such data-adaptive working models significantly reduce bias due to model misspecification compared with pre-specified parametric models. Second, even when the model is misspecified due to unobserved variables or incorrect functional forms, the resulting oracle bias in practice is often small due to many potential cancellations. To illustrate, we conducted numerical evaluations assessing the magnitude of oracle bias across various types and degrees of model misspecification, with results presented in Table \ref{tab:oracle_bias}. Overall, we observed that oracle bias tends to be small relative to the true effect size (0.5 in this case). If the oracle bias is of concern, we recommend conducting sensitivity analyses, such as the numerical evaluations presented here, to assess the magnitude of the oracle bias relative to the estimated effect size.
\begin{table}[h!]
\centering
\resizebox{\columnwidth}{!}{
  \begin{tabular}{|cll|rll|lll|}
  \toprule
  \multicolumn{3}{c|}{\textbf{Unobserved}} & \multicolumn{3}{c|}{\textbf{Misspecified}} & \multicolumn{3}{c}{\textbf{Unobserved + Misspecified}} \\ \midrule
  \multicolumn{1}{c}{Adjusted $R^2$} &
    \multicolumn{1}{c}{Weighted MSE} &
    \multicolumn{1}{c|}{Oracle bias} &
    \multicolumn{1}{c}{Adjusted $R^2$} &
    \multicolumn{1}{c}{Weighted MSE} &
    \multicolumn{1}{c|}{Oracle bias} &
    \multicolumn{1}{c}{Adjusted $R^2$} &
    \multicolumn{1}{c}{Weighted MSE} &
    \multicolumn{1}{c}{Oracle bias} \\ \midrule
  \multicolumn{1}{c}{0.92} & \multicolumn{1}{c}{$1.07 \times 10^{-3}$} & \multicolumn{1}{c|}{$-5.68 \times 10^{-5}$} & \multicolumn{1}{c}{0.95} & \multicolumn{1}{c}{$1.10 \times 10^{-4}$} & \multicolumn{1}{l|}{$-1.52 \times 10^{-6}$} & \multicolumn{1}{c}{0.92} & \multicolumn{1}{c}{$1.17 \times 10^{-3}$} & \multicolumn{1}{c}{$6.42 \times 10^{-5}$} \\ 
  \multicolumn{1}{c}{0.75} & \multicolumn{1}{c}{$4.29 \times 10^{-3}$} & \multicolumn{1}{c|}{$-1.27 \times 10^{-4}$} & \multicolumn{1}{c}{0.84} & \multicolumn{1}{c}{$4.43 \times 10^{-4}$} & \multicolumn{1}{l|}{$-1.62 \times 10^{-5}$} & \multicolumn{1}{c}{0.73} & \multicolumn{1}{c}{$4.72 \times 10^{-3}$} & \multicolumn{1}{c}{$1.41 \times 10^{-4}$} \\ 
  \multicolumn{1}{c}{0.57} & \multicolumn{1}{c}{$9.62 \times 10^{-3}$} & \multicolumn{1}{c|}{$6.85 \times 10^{-5}$} & \multicolumn{1}{c}{0.73} & \multicolumn{1}{c}{$9.75 \times 10^{-4}$} & \multicolumn{1}{l|}{$-8.38 \times 10^{-5}$} & \multicolumn{1}{c}{0.55} & \multicolumn{1}{c}{$1.05 \times 10^{-2}$} & \multicolumn{1}{c}{$3.18 \times 10^{-4}$} \\ 
  \multicolumn{1}{c}{0.43} & \multicolumn{1}{c}{$1.72 \times 10^{-2}$} & \multicolumn{1}{c|}{$-1.97 \times 10^{-4}$} & \multicolumn{1}{c}{0.63} & \multicolumn{1}{c}{$1.76 \times 10^{-3}$} & \multicolumn{1}{l|}{$1.48 \times 10^{-5}$} & \multicolumn{1}{c}{0.41} & \multicolumn{1}{c}{$1.87 \times 10^{-2}$} & \multicolumn{1}{c}{$4.04 \times 10^{-4}$} \\ 
  \multicolumn{1}{c}{0.32} & \multicolumn{1}{c}{$2.69 \times 10^{-2}$} & \multicolumn{1}{c|}{$-7.35 \times 10^{-5}$} & \multicolumn{1}{c}{0.56} & \multicolumn{1}{c}{$2.80 \times 10^{-3}$} & \multicolumn{1}{l|}{$1.09 \times 10^{-4}$} & \multicolumn{1}{c}{0.32} & \multicolumn{1}{c}{$2.94 \times 10^{-2}$} & \multicolumn{1}{c}{$2.23 \times 10^{-4}$} \\ 
  \multicolumn{1}{c}{0.25} & \multicolumn{1}{c}{$3.86 \times 10^{-2}$} & \multicolumn{1}{c|}{$-2.03 \times 10^{-6}$} & \multicolumn{1}{c}{0.51} & \multicolumn{1}{c}{$3.91 \times 10^{-3}$} & \multicolumn{1}{l|}{$1.69 \times 10^{-4}$} & \multicolumn{1}{c}{0.25} & \multicolumn{1}{c}{$4.23 \times 10^{-2}$} & \multicolumn{1}{c}{$2.47 \times 10^{-4}$} \\ 
  \multicolumn{1}{c}{0.20} & \multicolumn{1}{c}{$5.23 \times 10^{-2}$} & \multicolumn{1}{c|}{$-7.16 \times 10^{-5}$} & \multicolumn{1}{c}{0.47} & \multicolumn{1}{c}{$5.48 \times 10^{-3}$} & \multicolumn{1}{l|}{$1.00 \times 10^{-4}$} & \multicolumn{1}{c}{0.20} & \multicolumn{1}{c}{$5.75 \times 10^{-2}$} & \multicolumn{1}{c}{$5.11 \times 10^{-4}$} \\ 
  \multicolumn{1}{c}{0.16} & \multicolumn{1}{c}{$6.85 \times 10^{-2}$} & \multicolumn{1}{c|}{$-5.71 \times 10^{-4}$} & \multicolumn{1}{c}{0.44} & \multicolumn{1}{c}{$7.04 \times 10^{-3}$} & \multicolumn{1}{l|}{$2.27 \times 10^{-4}$} & \multicolumn{1}{c}{0.17} & \multicolumn{1}{c}{$7.52 \times 10^{-2}$} & \multicolumn{1}{c}{$5.86 \times 10^{-4}$} \\ 
  \multicolumn{1}{c}{0.13} & \multicolumn{1}{c}{$8.70 \times 10^{-2}$} & \multicolumn{1}{c|}{$4.06 \times 10^{-4}$} & \multicolumn{1}{c}{0.42} & \multicolumn{1}{c}{$8.96 \times 10^{-3}$} & \multicolumn{1}{l|}{$2.25 \times 10^{-6}$} & \multicolumn{1}{c}{0.15} & \multicolumn{1}{c}{$9.51 \times 10^{-2}$} & \multicolumn{1}{c}{$6.71 \times 10^{-4}$} \\ 
  \multicolumn{1}{c}{0.11} & \multicolumn{1}{c}{$1.07 \times 10^{-1}$} & \multicolumn{1}{c|}{$8.23 \times 10^{-4}$} & \multicolumn{1}{c}{0.40} & \multicolumn{1}{c}{$1.09 \times 10^{-2}$} & \multicolumn{1}{l|}{$9.86 \times 10^{-7}$} & \multicolumn{1}{c}{0.13} & \multicolumn{1}{c}{$1.17 \times 10^{-1}$} & \multicolumn{1}{c}{$1.01 \times 10^{-3}$} \\ 
  \bottomrule
  \end{tabular}
}
\caption{Numerical evaluations of the oracle bias, i.e., difference between the projection estimand and the nonparametrically defined estimand under various types and degrees of misspecification of the bias working model.}
\label{tab:oracle_bias}
\end{table}

Finally, note that our proposed design is not limited to the setting where one wishes to use A-TMLE as the estimator of choice in the downstream data integration analysis. In particular, weakening the dependence of trial enrollment on patient baseline characteristics also helps with stabilizing the efficient influence function of the target parameter $\Psi$ in the nonparametric model. To see that, we previously derived in \citep{adaptive_tmle_2024} that the efficient influence function of the target parameter $\Psi$ at $P$ is given by
\begin{align*}
D_{\Psi,P}(O)&=Q_P(1,W,1)-Q_P(1,W,0)-\Psi(P)\\
&+\frac{S}{P(S=1\mid W)}\cdot\frac{2A-1}{P(A\mid 1,W)}(Y-Q_P(S,W,A)).
\end{align*}
In particular, the nonparametric EIC involves inverse weighting of the trial enrollment score. Therefore, our design may also help in obtaining a more stable variance estimate if we were to estimate the target parameter nonparametrically using efficient estimators, such as a standard TMLE.

\section{Related Work}\label{sec:related_work}
From our literature review, we identified three settings for augmenting RCTs with external data. The first setting is where the RCT contains both treatment and control arms, and the external data also includes both treatment and control groups \citep{zhang_estimation_2021,li_augmenting_2022}. The second is where the RCT contains both treatment and control arms, but the external data includes only controls \citep{stuart_matching_2008,lin_propensity_2018,yuan_design_2019,liu_matching_2022,li_improved_2023,lin_matching_2023}. The third setting is a single-arm RCT with no concurrent controls, relying solely on external controls \citep{wang_propensity_2020,li_external_2021}. Single-arm trials are arguably the more challenging ones, as it often requires additional untestable assumptions to ensure valid inference for the causal estimand due to the absence of concurrent controls. 

Various matching-based designs have been proposed that are tailored to these different settings. In \citep{zhang_estimation_2021}, the authors proposed matching to balance the covariate distributions between the RCT and RWD, aligning with the first step of our matching design, which uses the trial enrollment score. They have demonstrated that such matching might help reduce the bias. However, their approach does not involve the additional propensity score matching within the external cohort in the second step of our strategy. Their method produces both the pooled estimate and the RCT-only estimate, leaving it to the reader to interpret any differences between the two. The setting considered in \citep{stuart_matching_2008} falls under the second setting. Although their work focused on an observational setting, they assumed that treatment assignment was ignorable in the combined treatment and primary control group. In other words, they assumed that all confounders were captured in the primary study. Statistically, this is similar (essentially equivalent) to the setting of an RCT, where randomization eliminates backdoor paths from treatment to covariates. A key difference between their setting and RCT and RWD data integration is that they attempted to only use a subset of primary controls in the final analysis because, in their observational setting, there was limited overlap between the treatment and primary control groups. Variations of this approach have since been adopted in the context of RCTs augmented with external controls. For example, \citep{yuan_design_2019} adapted the method to settings with an underpowered concurrent control arm, using external control patients to improve balance and ensure the treatment and control arms are of equal size. Later, \citep{li_improved_2023} proposed an improvement to this approach. Instead of matching based solely on a subset of treated patients from the treatment arm (with a sample size equal to the number of required matches from the external data), they proposed matching the entire concurrent RCT. This approach is similar to ours when there are only controls in the RWD. However, in our setting, we match based on the trial enrollment score rather than the propensity score. Interestingly, these two scores are proportional, making them theoretically equivalent in this context. To see this equivalence, consider a scenario where the external data contains only controls. Here, $A=1$ implies $S=1$, meaning that treated patients must belong to the RCT (as no treated patients exist in the external data). Thus, the propensity score can be expressed as: $P_0(A=1\mid W)=P_0(A=1,S=1\mid W)=P_0(S=1\mid W)P_0(A=1\mid S=1,W)$. Given that treatment in the RCT is randomized, say with probability 0.5, we have $P_0(A=1\mid S=1,W)=0.5$. Therefore, the propensity score in this setting is simply half of the trial enrollment score. It is important to note that the estimand may change depending on the matching strategy used. For a detailed discussion on how different matching strategies target different estimands in the context of RCT and RWD integration, we refer readers to \citep{lin_matching_2023}.

\section{Simulations}\label{sec:simulations}
We designed simulations to empirically evaluate our proposed matching-based design strategy. Specifically, we considered a scenario where investigators have access to an RCT with a fixed sample size of $n=400$ participants and a large pool of $n=25,000$ external patients from five different data sources. Each external source introduces a slightly different functional form and magnitude of bias, ordered from least biased (source 1) to most biased (source 5). Additionally, the baseline covariates from each source have varying degrees of covariate shift relative to the RCT covariate distribution, with source 5 having the largest covariate shifts. Details of the data-generating process are provided in Appendix B. This simulation setup reflects realistic scenarios where external data sources might come from different regions, hospital facilities, or time periods. For simplicity, we assume that all the external patients are trial-eligible, although in practice, one might need to filter patients based on inclusion and exclusion criteria first. 
\begin{figure}[h!]
\centering
\includegraphics[width=1\linewidth]{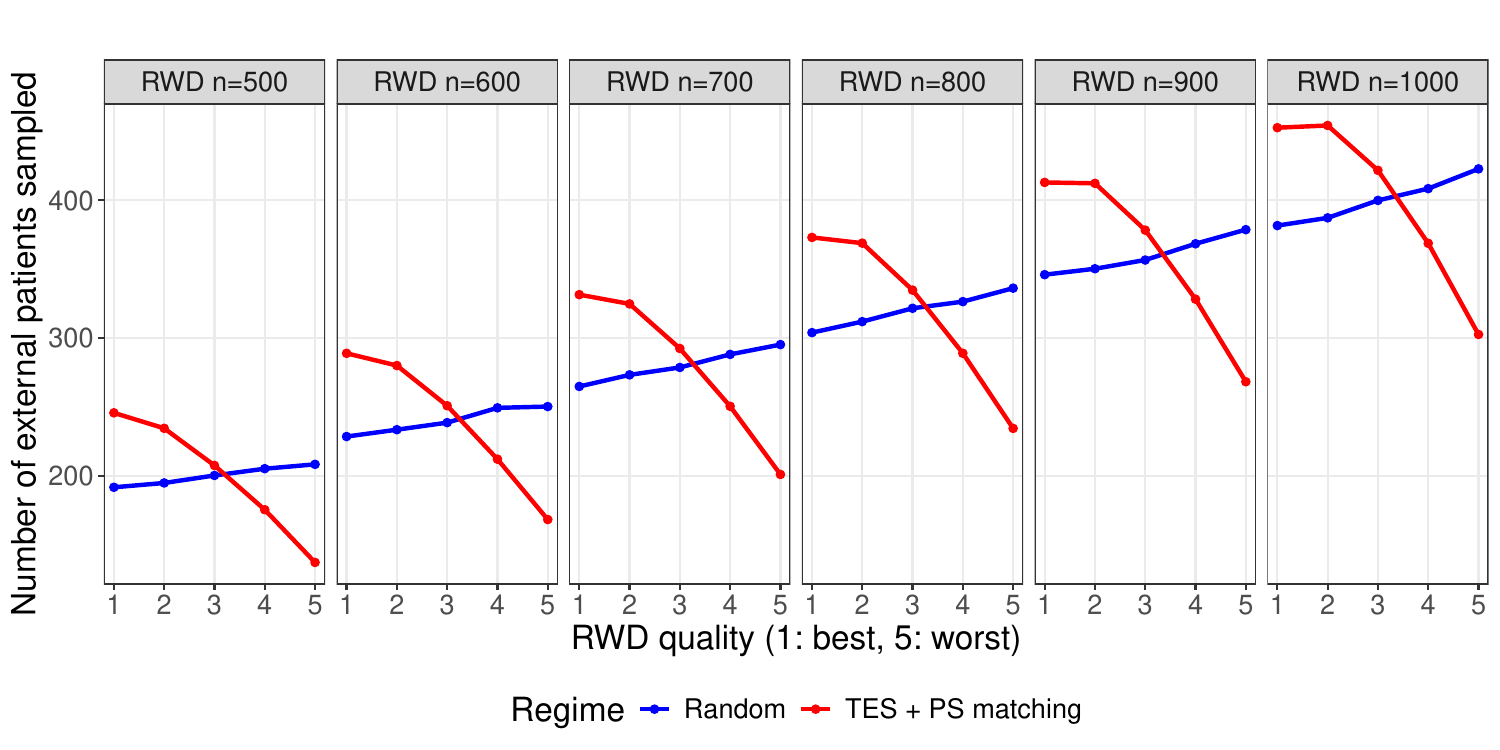}
\caption{The distribution of the number of external patients sampled across varying quality (i.e. different levels of bias) of external RWD sources under each sampling regime. TES + PS matching technique resulted in selecting more patients with less bias and fewer patients with large bias.}
\label{fig:random_vs_ideal_rwd_quality}
\end{figure}

We compared two strategies for sampling an external cohort of a desired sample size. The first is a random sampling approach, where external patients are randomly drawn from the available RWD pool. The second was our proposed strategy, which involves trial enrollment score matching followed by propensity score matching to select external patients. Specifically, we used $k=30$ and $m=1$: each RCT subject was matched with 30 external subjects from the larger pool of 20,000 subjects based on the estimated trial enrollment probability using nearest-neighbor matching \citep{ho_matchit_2011}. Next, within the subset of selected external patients, we estimated the propensity score and matched each treated external subject with one external control based on the estimated propensity score. To ensure a fair comparison, the sample size of the selected external cohort was kept the same between the random sampling and matching strategies. We then applied A-TMLE estimators to the pooled RCT data and the selected external RWD under each strategy. As a benchmark, we also evaluated estimates derived solely from the RCT data using a standard TMLE. Note that one may also apply an unadjusted estimator on the RCT as the benchmark estimator, but this would be less efficient than TMLE \citep{moore_increasing_2009}.

\begin{table}[h!]
\centering
\resizebox{\columnwidth}{!}{
\begin{tabular}{ccccccc}
\toprule
Sample size of external arms & Strategy & $|\text{Bias}|$ & Variance & 95\% CI widths & 95\% CI coverage & Power \\
\midrule
\multirow{1}{*}{0} & RCT data only & 0.002 & 0.081 & 1.169 & 0.97 & 0.36 \\
\midrule
\multirow{2}{*}{500} 
& Random & 0.119 & 0.103 & 0.959 & 0.86 & 0.40 \\
& TES + PS matching & \textbf{0.007} & \textbf{0.041} & \textbf{0.725} & \textbf{0.95} & \textbf{0.85} \\
\midrule
\multirow{2}{*}{600} 
& Random & 0.170 & 0.084 & 0.923 & 0.85 & 0.35 \\
& TES + PS matching & \textbf{0.028} & \textbf{0.045} & \textbf{0.698} & \textbf{0.94} & \textbf{0.87} \\
\midrule
\multirow{2}{*}{700} 
& Random & 0.156 & 0.093 & 0.905 & 0.79 & 0.39 \\
& TES + PS matching & \textbf{0.016} & \textbf{0.045} & \textbf{0.680} & \textbf{0.94} & \textbf{0.89} \\
\midrule
\multirow{2}{*}{800} 
& Random & 0.157 & 0.103 & 0.880 & 0.78 & 0.39 \\
& TES + PS matching & \textbf{0.012} & \textbf{0.046} & \textbf{0.679} & \textbf{0.95} & \textbf{0.89} \\
\midrule
\multirow{2}{*}{900} 
& Random & 0.134 & 0.080 & 0.881 & 0.84 & 0.42 \\
& TES + PS matching & \textbf{0.011} & \textbf{0.040} & \textbf{0.618} & \textbf{0.96} & \textbf{0.90} \\
\midrule
\multirow{2}{*}{1000} 
& Random & 0.167 & 0.085 & 0.859 & 0.80 & 0.38 \\
& TES + PS matching & \textbf{0.019} & \textbf{0.040} & \textbf{0.605} & \textbf{0.95} & \textbf{0.93} \\
\bottomrule
\end{tabular}
}
\caption{Absolute bias, variance, average 95\% confidence interval widths, coverage, and power by external data sample size and sampling strategy. ``RCT data only" denotes the RCT-only design, analyzed with the TMLE estimator. ``Random" denotes a randomly sampled external cohort of the specified size, analyzed with the A-TMLE estimator. ``TES + PS matching" denotes our proposed trial enrollment score and propensity score matching strategy for sampling the external cohort of the specified size, also analyzed with the A-TMLE estimator.}
\label{tab:atmle}
\end{table}

Figure \ref{fig:random_vs_ideal_rwd_quality} compares the number of patients selected from each external data source under the two sampling strategies. As expected, our proposed strategy predominantly selects patients from the least biased external data sources, while minimizing the selection of patients from the most biased sources. After selecting the external cohort, A-TMLE is applied to perform the data integration analysis. Table \ref{tab:atmle} presents the results, including absolute bias, variance, 95\% confidence interval widths, coverage, and statistical power of the three design strategies under comparison, evaluated across increasing sample sizes of the selected external cohort. We observe that under our proposed trial enrollment and propensity score matching strategy, the A-TMLE estimator had narrower confidence intervals, nominal coverage, and better statistical power to detect the difference. For the random sampling strategy, we observe that A-TMLE has below-nominal coverage. This is expected, as we intentionally used a main-term lasso rather than full HAL bases for learning the working models in A-TMLE, which results in model misspecification of the conditional average RCT-enrollment effect $\tau_{S,0}(W,A)$. However, the robustness structure of A-TMLE should recover some robustness if the trial enrollment score and the propensity score in the external data are close to constants as a result of our proposed matching-based sampling strategy. Indeed, we observe that not only is the bias reduced (as inferred from the simultaneously narrower confidence intervals and improved coverage), but the confidence interval coverage is also back to nominal. In Appendix \ref{app:additional_sims}, we present additional simulation results using ES-CVTMLE \citep{dang_escvtmle_2022}, another data integration estimator, under the same simulation setup. These findings align with those observed for A-TMLE. Specifically, the ES-CVTMLE estimator had smaller bias and variance, narrower confidence intervals, and better coverage under the proposed matching-based design compared to the random sampling regime.

\section{Case Study: Augmenting the DEVOTE Trial}\label{sec:case_study}
We applied our proposed matching strategy to select external real-world patients from the Optum Clinformatics claims database to augment the DEVOTE trial. The DEVOTE trial was a double-blind, treat-to-target, event-driven study designed to assess the cardiovascular safety of degludec, an ultralong-acting, once-daily basal insulin approved by the FDA for use in various diabetic populations \citep{marso_efficacy_2017}. The primary objective of the trial was to evaluate whether degludec posed no greater cardiovascular risk than glargine, with a noninferiority hazard ratio margin of 1.3 (allowing up to 30\% excess risk). The treatment arm ($n=3,818$) comprised patients receiving insulin degludec, while the comparator arm ($n=3,817$) consisted of patients treated with insulin glargine. The primary endpoint was a composite measure of major adverse cardiac events (MACE), including cardiovascular death, nonfatal myocardial infarction, and nonfatal stroke. In the methodology paper on the A-TMLE for data integration, we used the same trial data and external data source to demonstrate an application of the A-TMLE method \citep{adaptive_tmle_2024}. The difference is that in that analysis, all external patients meeting the trial's inclusion and exclusion criteria were included without matching. Here, we extend that case study by incorporating our proposed external patient selection method to further improve the robustness of the data integration analysis. For the current case study, we focus on the 540- and 730-day (1.5- and 2-year) risk differences of MACE between the two study arms. In \citep{adaptive_tmle_2024}, we also highlighted the significance of augmenting this particular trial with external RWD. Specifically, as a noninferiority trial, DEVOTE was not originally designed to test for superiority. Data integration therefore provides a way to explore the opportunity for detecting treatment superiority by leveraging external RWD, avoiding the need for a separate, dedicated superiority trial, which may be time-consuming and costly especially for studying relatively rare endpoints like MACE.

\subsection{Selection of external real-world patients}
To address time-related bias, we restricted the sampling of external data to the period between 2013 and 2019. Although the trial concluded in 2017, we extended the sampling window by two additional years to increase the inclusion of insulin degludec (treatment arm) patients, as relatively few patients were using this drug by 2017 due to its recent approval. After restricting the time frame, we applied the same inclusion and exclusion criteria as the trial to filter external patients, aiming to emulate the trial's patient enrollment process as closely as possible. 

Next, we implemented our proposed matching strategy to refine the external cohort for the final data integration analysis. We began by identifying covariates that could predict treatment assignment in the external data, including age, sex, low-density lipoprotein, serum creatinine levels, hemoglobin A1C, and others. We then performed trial enrollment score matching as the first step. Specifically, we fitted a logistic regression on the pooled RCT and external data, regressing the study indicator $S$ on these covariates to estimate the trial enrollment score for all patients. Using these estimated scores, we conducted 1:3 nearest-neighbor matching and selected a total of $22,905$ candidate patients from the Optum database, which is exactly three times the sample size of the RCT. Following this, we performed the second stage of matching, propensity score matching within the external data. We fitted another logistic regression, this time regressing treatment status $A$ on the same set of covariates, but only using the selected subset of external data from the previous trial enrollment score matching. Based on the estimated propensity scores, we conducted 1:5 nearest-neighbor matching. The rationale for including more external control patients is the greater availability of control patients in the external data, as the comparator drug had been on the market for a longer period than insulin degludec. We aimed to maximize the use of these control patients without unnecessary exclusion. The resulting cohort, comprising $2,359$ treated and $11,795$ control patients, constitutes the final dataset for data integration analysis.
\begin{figure}[h!]
\centering
\includegraphics[width=0.8\linewidth]{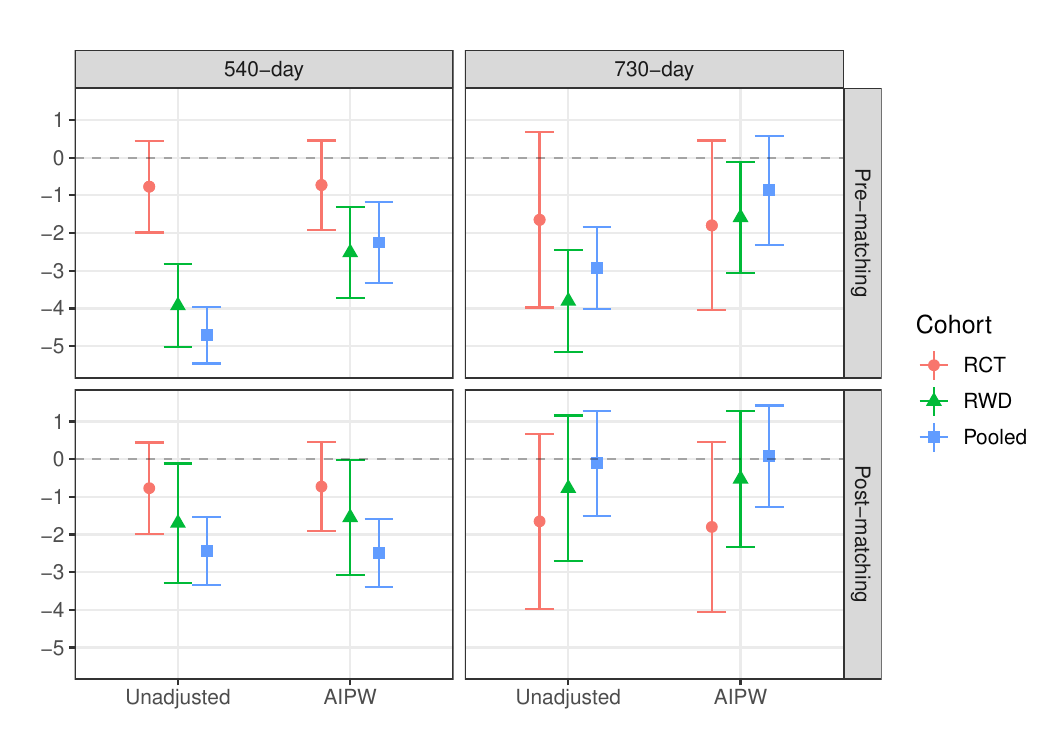}
\caption{Unadjusted and adjusted point estimates with corresponding 95\% confidence intervals for the 540-day and 730-day risk difference (\%) of MACE, based on the RCT, RWD, and pooled cohorts, both pre- and post-matching. The adjusted estimates are obtained using the augmented inverse probability weighting (AIPW) estimator.}
\label{fig:preliminary}
\end{figure}

\subsection{Preliminary analysis}
We began with a preliminary analysis without applying any data integration estimators. Specifically, we compared the unadjusted and adjusted (AIPW) estimates on the RCT, RWD versus the pooled data obtained before and after implementing our matching strategy. Note that the RWD pre-matching includes all candidate patients after the initial time frame restriction and inclusion and exclusion criteria filtering, whereas the RWD patients post-matching includes only those who were matched by our design. The results are presented in Figure \ref{fig:preliminary}.

The pre-matching results show a noticeable discrepancy between the estimates derived from the RWD (or pooled data) and those from the RCT. Since the RCT is expected to provide an unbiased estimate of the causal effect, this discrepancy can be attributed to two factors. First, the target parameters for the RCT, RWD, and pooled data differ. Specifically, the RWD estimates reflect the real-world effect, the RCT estimates capture the trial effect, and the pooled data estimates reflect the effect in the combined population. Differences in the populations under study can naturally lead to differences in the estimated effects. Second, the discrepancy may stem from potential violations of identification assumptions in the RWD. For instance, unmeasured confounding in the RWD could bias the estimates and explain their divergence from the RCT results. This difference is particularly pronounced at 540 days, where the unadjusted RWD estimates do not overlap with the RCT estimates. Although adjusting for potential confounders narrows the gap, the difference remains substantial. In contrast, after performing the matching and selecting a subset of external patients, the point estimates are closer to the RCT estimates, suggesting that the matching strategy helps reduce discrepancies to some extent. 
\begin{figure}[h!]
\centering
\includegraphics[width=0.8\linewidth]{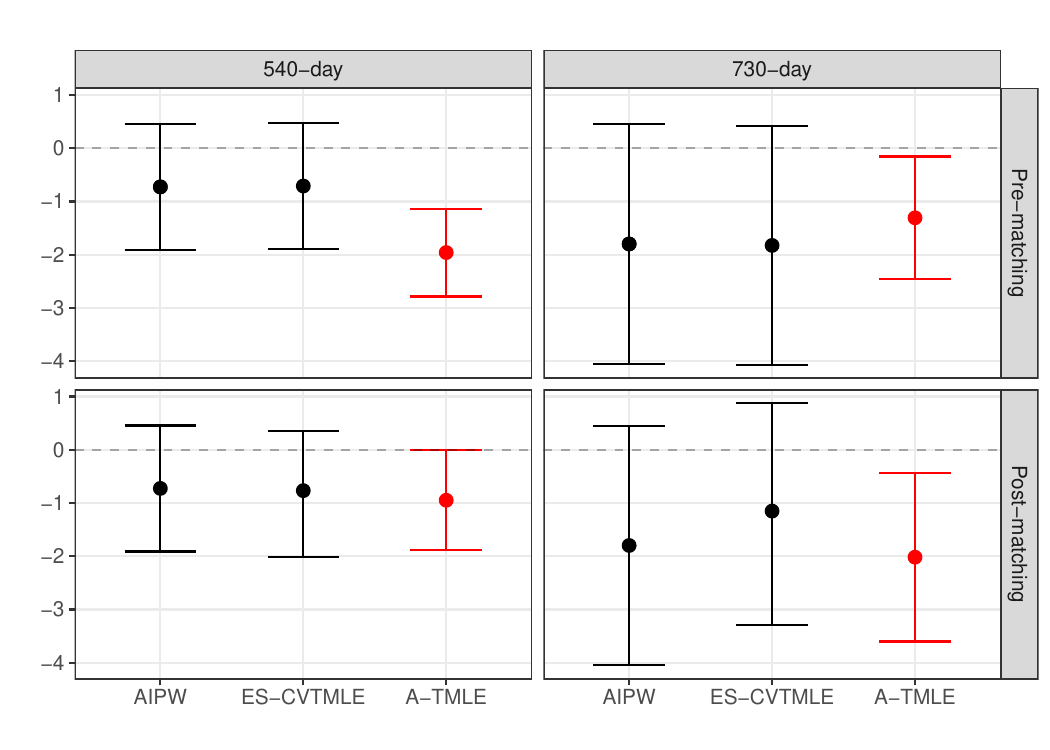}
\caption{Data integration analysis results. ES-CVTMLE and A-TMLE point estimates with corresponding 95\% confidence intervals for the 540-day and 730-day risk difference (\%) of MACE, both pre- and post-matching. AIPW estimates are based on RCT data only.}
\label{fig:data_integration_analysis}
\end{figure}
\begin{table}[h!]
\centering
\resizebox{\linewidth}{!}{
\begin{tabular}{ccc|cc}
\toprule
& & & \multicolumn{2}{c}{Estimated risk difference (\%) of MACE with 95\% CIs} \\
Pre/Post-matching & Estimator & Study design & 540-day & 730-day \\
\midrule
- & AIPW & DEVOTE & -0.726 (-1.910, 0.459) & -1.798 (-4.050, 0.455) \\
\midrule
\multirow{2}{*}{Pre} & ES-CVTMLE & DEVOTE + Optum & -0.709 (-1.893, 0.475) & -1.825 (-4.071, 0.420) \\ 
& A-TMLE & DEVOTE + Optum & \textbf{-1.959 (-2.783, -1.134)} & \textbf{-1.306 (-2.459, -0.153)} \\ 
\midrule
\multirow{2}{*}{Post} & ES-CVTMLE & DEVOTE + Optum & -0.767 (-2.008, 0.360) & -1.152 (-3.298, 0.885) \\ 
& A-TMLE & DEVOTE + Optum & \textbf{-0.948 (-1.889, -0.008)} & \textbf{-2.018 (-3.601, -0.436)} \\ 
\bottomrule
\end{tabular}
}
\caption{Estimated risk differences (\%) of MACE between the insulin degludec and insulin glargine groups, with corresponding 95\% confidence intervals. Statistically significant estimates under $\alpha=0.05$ for a two-sided test are marked in bold.}
\label{tab:devote_optum}
\end{table}

\subsection{Data integration analysis}
However, as suggested by Figure \ref{fig:preliminary}, matching alone cannot address all potential biases. To ensure robust and unbiased estimates, it is crucial to apply estimators that do not rely on assumptions beyond those that are known to be true by the design of the RCT. We therefore applied data integration estimators ES-CVTMLE and A-TMLE to the pooled data to handle any residual biases. The results after applying data integration estimators are presented in Figure \ref{fig:data_integration_analysis}, where we compare estimates obtained before and after applying the matching strategy. The values for the point estimates and corresponding 95\% confidence intervals are shown in Table \ref{tab:devote_optum}. The results show that the confidence interval widths of A-TMLE are significantly smaller compared to both the RCT-only and ES-CVTMLE estimates, and A-TMLE achieves statistical significance for testing superiority. Comparing the pre- and post-matching estimates, we observe that at the 540-day mark, using all external data without matching results in an A-TMLE point estimate that lies outside the RCT estimate's confidence interval. After implementing the matching step, the A-TMLE point estimate aligns much more closely with the RCT estimate, suggesting that matching effectively helps reduce bias in this case. These findings show the importance of combining matching with a robust data integration estimator. While matching alone can reduce potential bias from covariates, applying an estimator like A-TMLE after matching serves as an additional safeguard, ensuring that the results are more reliable. We therefore recommend pairing matching with a data integration estimator to enhance the validity of the analysis.

\section{Discussion}\label{sec:discussion}
In this article, we analyzed the robustness structure of the A-TMLE estimator and proposed a matching-based design to select external patients in a way that leverages these robustness features. Specifically, our analysis of the exact remainder revealed that if the trial enrollment score, $P_0(S=1\mid W)$, and the propensity score in the external data, $P_0(A=1\mid S=0,W)$, are fully known, the exact remainder of the projection parameter becomes zero. Furthermore, if these scores can be arranged by design to be constant in $W$, the oracle bias would also be zero, meaning that we are robust even with respect to the nonparametrically defined target parameter. Motivated by this analysis, we proposed a two-step matching design, where we first match based on the estimated trial enrollment score to select a subset of external patients, and then, within this subset, we perform propensity score matching. Empirical simulations demonstrated that this approach, in scenarios where the bias working model is misspecified, reduces the MSE of point estimates, improves confidence interval coverage, and increases statistical power of the A-TMLE estimator. Applying this methodology to augment the DEVOTE trial with external patients from the Optum database showed that this matching strategy could align the estimates obtained from data integration estimators more closely with those from the RCT, suggesting an improvement in the robustness of data integration estimators like A-TMLE. Additionally, the case study demonstrated the importance of applying robust estimators after matching, as residual biases may persist that cannot be fully addressed by matching alone. 

We would like to point out a few limitations of our approach. First, one may be concerned that the matched sample introduces a selection bias, as individuals are no longer independently and identically drawn from the target population. However, we also note that RCT samples are often not drawn i.i.d from the target population as well. If the priority of data integration is to improve the power of the treatment effect estimators in the trials, then this biased sampling might act in our favor by adding more robustness to the estimation of treatment effects within the trial augmented with external RWD. Second, in order to fully exploit the robustness structure of A-TMLE, one needs to ensure that the covariates are exactly matched, which may be difficult to achieve in practice if covariates are continuous or high-dimensional. Nonetheless, the value of this matching approach lies in its practical potential to improve control over the trial enrollment and propensity scores in the sampled subset of external data. As discussed in Section \ref{sec:design}, this would potentially allow for a slightly slower convergence rate for the working models without compromising valid inference, thereby making the estimator more robust. In addition, the matching steps may simplify the bias working model by reducing bias due to covariates, which is often a major source of overall bias. If the covariates are effectively balanced after the matching, the true bias is likely to be smaller and potentially more parsimonious. As highlighted in \citep{adaptive_tmle_2024}, A-TMLE benefits from scenarios where the bias is small. In such cases, even a simpler working model may approximate the bias well enough, despite the true bias being fully nonparametric. This can lead to additional efficiency gains for A-TMLE, especially in finite samples. While it may seem intuitive that larger external datasets always improve estimation, this is not necessarily the case in finite samples. Well-matched subsets of external data can often yield better finite sample performance by reducing the burden of the bias correction.

Our proposed design is completely outcome-blind. When estimating the trial enrollment and propensity scores, no outcome data is used, avoiding the risk of cherry-picking external patients with favorable outcomes. The procedure can also be fully pre-specified. Additionally, the approach is highly flexible. Researchers can experiment with different matching algorithms, each producing a subset of external patients, and rank these subsets based on how well they balance covariates. The estimated trial enrollment score can serve as a criterion for evaluating the quality of the matches. In conclusion, our proposed strategy offers a practical framework for augmenting RCTs with external RWD while mitigating potential biases. By applying a matching-based design motivated to make the estimator more robust, paired with robust estimation methods like A-TMLE, this approach has the potential to improve both the accuracy and efficiency of causal inferences in data integration settings.

\newpage
\bibliography{references}

\appendix
\section{Proof of Lemmas}\label{app_R2}
\subsection{Proof of Lemma \ref{lem:param_unbiased}}
\begin{proof}
We define the exact remainders
\begin{align*}
R_{\mathcal{M}_{w}}(P,P_0) &\equiv \Psi_{\mathcal{M}_{w}}(P) - \Psi_{\mathcal{M}_{w}}(P_0) + P_0 D_{\mathcal{M}_{w},P}, \\
\tilde{R}_{\mathcal{M}_{A,w}}(P,P_0) &\equiv \tilde{\Psi}_{\mathcal{M}_{A,w}}(P) - \tilde{\Psi}_{\mathcal{M}_{A,w}}(P_0) + P_0 D_{\mathcal{M}_{A,w},P}, \\
R^\#_{\mathcal{M}_{S,w}}(P,P_0) &\equiv \Psi^\#_{\mathcal{M}_{S,w}}(P) - \Psi^\#_{\mathcal{M}_{S,w}}(P_0) + P_0 D_{\mathcal{M}_{S,w},P}.
\end{align*}
As discussed in Section \ref{sec:design}, the robustness of an estimator is closely tied to the behavior of the exact remainder. To gain insights into the robustness of the A-TMLE estimators, we derive the remainders $\tilde{R}_{\mathcal{M}_{A,w}}(P,P_0)$ and $R^\#_{\mathcal{M}_{S,w}}(P,P_0)$, which correspond to the remainders for the pooled-ATE projection parameter and the bias projection parameter, respectively.

\subsubsection*{Exact remainder of the pooled-ATE projection parameter $\tilde{\Psi}_{\mathcal{M}_{A,w}}$}
Let $\beta$ represent a vector of real-valued coefficients, and let $\phi(W)$ denote the design matrix consisting of $p$ basis functions. Let $\phi_{j}(W)$ and $\beta(j)$ denote the $j$-th basis function and its corresponding coefficient, respectively. Recall that
$$
\tilde{\Psi}_{\mathcal{M}_{A,w}}(P)=E_P[\tau_{A,\beta_P}(W)]=E_P\sum_j\beta_P(j)\phi_{j}(W).
$$
We derived in lemmas 4 and 5 in prior work \citep{adaptive_tmle_2024} that the canonical gradient of $\tilde{\Psi}_{\mathcal{M}_{A,w}}$ at $P$ is given by
$$
D_{\tilde{\Psi},\mathcal{M}_{A,w},P}=\tau_{A,\beta_P}(W)-\tilde{\Psi}_{\mathcal{M}_{A,w}}(P)+\sum_jD_{\beta,P,j}E_P\phi_{j}(W),
$$
where
$$
D_{\beta,P}=(A-g_P(1\mid W))\tilde{I}_P^{-1}\phi(W)(Y-\theta_P(W)-(A-g_P(1\mid W))\tau_{A,\beta_P}(W)),
$$
and
$$
\tilde{I}_{P}=E_Pg(1-g)(1\mid W)\phi\phi^\top(W).
$$
We now compute the exact remainder
$$
\tilde{R}_{\mathcal{M}_{A,w}}(P,P_0)=\tilde{\Psi}_{\mathcal{M}_{A,w}}(P)-\tilde{\Psi}_{\mathcal{M}_{A,w}}(P_0)+P_0D_{\tilde{\Psi},\mathcal{M}_{A,w},P}.
$$
Note that,
\begin{align*}
\tilde{\Psi}_{\mathcal{M}_{A,w}}(P)+P_0D_{\tilde{\Psi},\mathcal{M}_{A,w},P}
&=P_0\tau_{A,\beta_P}(W)+P_0\left[\sum_jD_{\beta,P,j}E_P\phi_{j}(W)\right]\\
&=P_0\tau_{A,\beta_P}(W)+\sum_jE_P\phi_{j}(W)P_0D_{\beta,P,j}.
\end{align*}
In particular, since $\bar{Q}_0(W,A)=\theta_0(W)+(A-g_0(1\mid W))\tau_{A,\beta_0}(W)$,
\begin{align*}
P_0D_{\beta,P}=\tilde{I}_{P}^{-1}P_0\bigg[(A-g_P(1\mid W))\phi(W)(&\bar{Q}_0(W,A)-\theta_P(W)-(A-g_P(1\mid W))\tau_{A,\beta_P}(W))\bigg]\\
=\tilde{I}_{P}^{-1}P_0\bigg[(A-g_P(1\mid W))\phi(W)(&\theta_0(W)+(A-g_0(1\mid W))\tau_{A,\beta_0}(W)\\
-&\theta_P(W)-(A-g_P(1\mid W))\tau_{A,\beta_P}(W))\bigg].
\end{align*}
Let's then define
$$
P_0D_{\beta,P}=\tilde{I}_{P}^{-1}(\mathbf{A}+\mathbf{B}+\mathbf{C}),
$$
where
\begin{align*}
\mathbf{A}&\equiv P_0\bigg[(A-g_P(1\mid W))\phi(W)(\theta_0-\theta_P)(W)\bigg]\\
&=P_0\bigg[(g_0-g_P)(1\mid W)\phi(W)(\theta_0-\theta_P)(W)\bigg];
\end{align*}
\begin{align*}
\mathbf{B}&=P_0\bigg[(A-g_P(1\mid W))\phi(W)A(\tau_{A,\beta_0}-\tau_{A,\beta_P})(W)\bigg]\\
&=P_0\bigg[g_0(1-g_P)(1\mid W)\phi(W)(\tau_{A,\beta_0}-\tau_{A,\beta_P})(W)\bigg]\\
&=P_0\bigg[(g_0-g_P)(1-g_P)(1\mid W)\phi(W)(\tau_{A,\beta_0}-\tau_{A,\beta_P})(W)\bigg]\\
&+P_0\bigg[g_P(1-g_P)(1\mid W)\phi(W)(\tau_{A,\beta_0}-\tau_{A,\beta_P})(W)\bigg];
\end{align*}
and
\begin{align*}
\mathbf{C}&=P_0\bigg[(A-g_P(1\mid W))\phi(W)(g_P(1\mid W)\tau_{A,\beta_P}(W)-g_0(1\mid W)\tau_{A,\beta_0}(W))\bigg]\\
&=P_0\bigg[(A-g_P(1\mid W))\phi(W)(g_P-g_0)(1\mid W)\tau_{A,\beta_P}(W)\bigg]\\
&+P_0\bigg[(A-g_P(1\mid W))\phi(W)g_0(1\mid W)(\tau_{A,\beta_P}-\tau_{A,\beta_0})(W)\bigg]\\
&=-P_0\bigg[(g_P-g_0)^2(1\mid W)\phi(W)\tau_{A,\beta_P}(W)\bigg]\\
&+P_0\bigg[(g_0-g_P)(1\mid W)\phi(W)g_0(1\mid W)(\tau_{A,\beta_P}-\tau_{A,\beta_0})(W)\bigg].
\end{align*}
Since $\mathbf{A}$ and $\mathbf{C}$ have second-order differences in $P-P_0$, we focus on $\mathbf{B}$. The first term of $\mathbf{B}$ is second-order. Let $\mathbf{B}_{2}$ denote the second term of $\mathbf{B}$, i.e.,
$$
\mathbf{B}_{2}\equiv P_0\bigg[g_P(1-g_P)(1\mid W)\phi(W)(\tau_{A,\beta_0}-\tau_{A,\beta_P})(W)\bigg].
$$
Since $\tilde{I}_{P}=E_Pg_P(1-g_P)(1\mid W)\phi\phi^\top(W)$, we have
$$
\sum_jE_P\phi_{j}(W)(\tilde{I}_{P}^{-1}\mathbf{B}_{2})_j=P_0\tau_{A,\beta_0}(W)-P_0\tau_{A,\beta_P}(W)=\tilde{\Psi}_{\mathcal{M}_{A,w}}(P_0)-P_0\tau_{A,\beta_P}(W).
$$
To summarize, the exact remainder $\tilde{R}_{\mathcal{M}_{A,w}}(P,P_0)$ is
\begin{align*}
\tilde{R}_{\mathcal{M}_{A,w}}(P,P_0)=\sum_jE_P\phi_j(W)\{\tilde{I}_{P}^{-1}
P_0[&(g_P-g_0)(1\mid W)\phi(W)(\theta_P-\theta_0)(W)\\
+&(g_P-g_0)(1-g_P)(1\mid W)\phi(W)(\tau_{A,\beta_P}-\tau_{A,\beta_0})(W)\\
-&(g_P-g_0)^2(1\mid W)\phi(W)\tau_{A,\beta_P}(W)\\
+&(g_P-g_0)(1\mid W)\phi(W)g_0(1\mid W)(\tau_{A,\beta_P}-\tau_{A,\beta_0})(W)]\}_j.
\end{align*}
In particular, note that if $g_P=g_0$, then the exact remainder $\tilde{R}_{\mathcal{M}_{A,w}}(P,P_0)=0$, even if $\theta$ and $\tau_{A}$ are misspecified.

\subsubsection*{Exact remainder of the bias projection parameter $\Psi^\#_{\mathcal{M}_{S,w}}$}
Now, let $\beta$ represent a vector of real-valued coefficients, and let $\phi(W,A)$ denote the design matrix consisting of $p$ basis functions. We reuse the notations $\beta$ and $\phi$ here, although they differ from those used previously for $\tilde{\Psi}_{\mathcal{M}_{A,w}}$. The context, i.e., whether referring to $\tilde{\Psi}_{\mathcal{M}_{A,w}}$ or $\Psi^\#_{\mathcal{M}_{S,w}}$, will distinguish their meanings. Let $\phi_{j}(W,A)$ and $\beta(j)$ denote the $j$-th basis function and its corresponding coefficient, respectively. Recall that
$$
\Psi^\#_{\mathcal{M}_{S,w}}(P)=E_P\Pi_P(0\mid W,0)\tau_{S,\beta_{P}}(W,0)-E_P\Pi_P(0\mid W,1)\tau_{S,\beta_{P}}(W,1).
$$
We derived in lemmas 7 and 8 in prior work \citep{adaptive_tmle_2024} that the canonical gradient of $\Psi^\#_{\mathcal{M}_{S,w}}$ at $P$ is given by
$$
D^{\#}_{{\cal M}_{S,w},P}=D^{\#}_{{\cal M}_{S,w},P_W,P}+D^{\#}_{{\cal M}_{S,w},\Pi,P}+D^{\#}_{{\cal M}_{S,w},\beta,P},
$$
where
\begin{align*}
D^{\#}_{{\cal M}_{S,w},P_W,P}&= \Pi_P(0\mid W,0)\tau_{S,\beta_P}(W,0)-\Pi_P(0\mid W,1)\tau_{S,\beta_P}(W,1)-\Psi^{\#}_{{\cal M}_{S,w}}(P)\\
D^{\#}_{{\cal M}_{S,w},\Pi,P}&=\left[\frac{A}{g_P(1\mid W)}\tau_{S,\beta_P}(W,1)-\frac{1-A}{g_P(0\mid W)}\tau_{S,\beta_P}(W,0)\right](S-\Pi_P(1\mid W,A))\\
D^{\#}_{{\cal M}_{S,w},\beta,P}&=\sum_j D_{\beta,P,j} E_P\Pi_P(0\mid W,0)\phi_{j}(W,0)-\sum_j D_{\beta,P,j} E_P \Pi_P(0\mid W,1)\phi_{j}(W,1),
\end{align*}
where
$$
D_{\beta,P}=I_{P}^{-1}(S-\Pi_P(1\mid W,A))\phi(W,A)(Y-\bar{Q}_P(W,A)-(S-\Pi_P(1\mid W,A))\tau_{S,\beta_P}(W,A)),
$$
and
$$
I_P=E_P \Pi(1-\Pi)(1\mid W,A){\bf \phi}{\bf \phi}^{\top}(W,A)
$$
First note that, for the $P_W$-component, we have 
$$
P_0D^{\#}_{{\cal M}_{S,w},P_W,P}=P_0\left[\Pi_P(0\mid W,0)\tau_{S,\beta_P}(W,0)\right]-P_0\left[\Pi_P(0\mid W,1)\tau_{S,\beta_P}(W,1)\right]-P_0\Psi^{\#}_{{\cal M}_{S,w}}(P).
$$
For the $\Pi$-component, we have
\begin{align*}
&P_0D^{\#}_{{\cal M}_{S,w},\Pi,P}\\
&=P_0\left[\left(\frac{A}{g_P(1\mid W)}\tau_{S,\beta_{P}}(W,1)-\frac{1-A}{g_P(0\mid W)}\tau_{S,\beta_{P}}(W,0)\right)(\Pi_0-\Pi_P)(1\mid W,A)\right]\\
&=P_0\left[\frac{g_0}{g_P}(1\mid W)\tau_{S,\beta_{P}}(W,1)(\Pi_0-\Pi_P)(1\mid W,1)-\frac{g_0}{g_P}(0\mid W)\tau_{S,\beta_{P}}(W,0)(\Pi_0-\Pi_P)(1\mid W,0)\right]\\
&=P_0\left[\frac{g_0-g_P}{g_P}(1\mid W)\tau_{S,\beta_{P}}(W,1)(\Pi_0-\Pi_P)(1\mid W,1)\right]+P_0\left[\tau_{S,\beta_{P}}(W,1)(\Pi_0-\Pi_P)(1\mid W,1)\right]\\
&-P_0\left[\frac{g_0-g_P}{g_P}(0\mid W)\tau_{S,\beta_{P}}(W,0)(\Pi_0-\Pi_P)(1\mid W,0)\right]-P_0\left[\tau_{S,\beta_{P}}(W,0)(\Pi_0-\Pi_P)(1\mid W,0)\right].
\end{align*}
Note that,
\begin{align*}
&-P_0\left[\tau_{S,\beta_{P}}(W,0)(\Pi_0-\Pi_P)(1\mid W,0)\right]\\
&=-P_0\left[\tau_{S,\beta_{P}}(W,0)(\Pi_P-\Pi_0)(0\mid W,0)\right]\\
&=-P_0\left[(\tau_{S,\beta_{P}}-\tau_{S,\beta_{0}})(W,0)(\Pi_P-\Pi_0)(0\mid W,0)\right]-P_0\left[\tau_{S,\beta_0}(W,0)(\Pi_P-\Pi_0)(0\mid W,0)\right].
\end{align*}
Similarly,
\begin{align*}
&P_0\left[\tau_{S,\beta_{P}}(W,1)(\Pi_0-\Pi_P)(1\mid W,1)\right]\\
&=P_0\left[\tau_{S,\beta_{P}}(W,1)(\Pi_P-\Pi_0)(0\mid W,1)\right]\\
&=P_0\left[(\tau_{S,\beta_{P}}-\tau_{S,\beta_0})(W,1)(\Pi_P-\Pi_0)(0\mid W,1)\right]+P_0\left[\tau_{S,\beta_0}(W,1)(\Pi_P-\Pi_0)(0\mid W,1)\right].
\end{align*}
For the $\beta$-component, we define
$$
D^{\#}_{\mathcal{M}_{S,w},\beta,P}\equiv D^{\#}_{\mathcal{M}_{S,w},\beta,P,A_0}-D^{\#}_{\mathcal{M}_{S,w},\beta,P,A_1},
$$
where
$$
D^{\#}_{\mathcal{M}_{S,w},\beta,P,A_0}\equiv \sum_j D_{\beta,P,j} E_P\Pi_P(0\mid W,0)\phi_{j}(W,0)
$$
and
$$
D^{\#}_{\mathcal{M}_{S,w},\beta,P,A_1}\equiv \sum_j D_{\beta,P,j} E_P \Pi_P(0\mid W,1)\phi_{j}(W,1).
$$
We first focus on the control-arm term $D^{\#}_{\mathcal{M}_{S,w},\beta,P,A_0}$. Note that we can reparameterize $Q_0(S,W,A)$ as $\bar{Q}_0(W,A)+(S-\Pi_0(1\mid W,A))\tau_{S,\beta_0}(W,A)$. Therefore,
\begin{align*}
P_0D_{\beta,P}=I_{P}^{-1}P_0\big[(S-\Pi_P(1\mid W,A))\phi(W,A)(&E_0(Y\mid S,W,A)-\bar{Q}_P(W,A)-(S-\Pi_P(1\mid W,A))\tau_{S,\beta_{P}}(W,A))\big]\\
=I_{P}^{-1}P_0\big[(S-\Pi_P(1\mid W,A))\phi(W,A)(&\bar{Q}_0(W,A)+(S-\Pi_0(1\mid W,A))\tau_{S,\beta_0}(W,A)\\
-&\bar{Q}_P(W,A)-(S-\Pi_P(1\mid W,A))\tau_{S,\beta_{P}}(W,A))\big].
\end{align*}
We define
\begin{align*}
D^{\#}_{\mathcal{M}_{S,w},\beta,P,A_0}&=\sum_jE_P\Pi_P(0\mid W,0)\phi_j(W,0)\left[I_{P}^{-1}(\mathbf{A}+\mathbf{B}+\mathbf{C})\right]_j,
\end{align*}
where
\begin{align*}
\mathbf{A}&\equiv P_0\left[(S-\Pi_P(1\mid W,A))\phi(W,A)(\bar{Q}_0-\bar{Q}_P)(W,A)\right]\\
&=P_0\left[(\Pi_P-\Pi_0)(1\mid W,A)\phi(W,A)(\bar{Q}_P-\bar{Q}_0)(W,A)\right];\\
\mathbf{B}&\equiv P_0\left[(S-\Pi_P(1\mid W,A))\phi(W,A)S(\tau_{S,\beta_0}-\tau_{S,\beta_P})(W,A)\right]\\
&=P_0\left[\Pi_0(1-\Pi_P)(1\mid W,A)\phi(W,A)
(\tau_{S,\beta_0}-\tau_{S,\beta_P})(W,A)\right]\\
&=P_0\left[(\Pi_0-\Pi_P)(1-\Pi_P)(1\mid W,A)\phi(W,A)(\tau_{S,\beta_0}-\tau_{S,\beta_P})(W,A)\right]\\
&+P_0\left[\Pi_P(1-\Pi_P)(1\mid W,A)\phi(W,A)(\tau_{S,\beta_0}-\tau_{S,\beta_P})(W,A)\right];\\
\mathbf{C}&\equiv P_0\bigg[(S-\Pi_P(1\mid W,A))\phi(W,A)(\Pi_P(1\mid W,A)\tau_{S,\beta_P}(W,A)-\Pi_0(1\mid W,A)\tau_{S,\beta_0}(W,A))\bigg]\\
&=P_0\bigg[(\Pi_0-\Pi_P)(1\mid W,A)\phi(W,A)(\Pi_P(1\mid W,A)\tau_{S,\beta_P}(W,A)-\Pi_0(1\mid W,A)\tau_{S,\beta_0}(W,A))\bigg]\\
&=P_0\bigg[(\Pi_0-\Pi_P)(1\mid W,A)\phi(W,A)(\Pi_P(1\mid W,A)\tau_{S,\beta_P}(W,A)-\Pi_0(1\mid W,A)\tau_{S,\beta_P}(W,A)\\
&+\Pi_0(1\mid W,A)\tau_{S,\beta_P}(W,A)-\Pi_0(1\mid W,A)\tau_{S,\beta_0}(W,A))\bigg]\\
&=-P_0\bigg[(\Pi_P-\Pi_0)^2(1\mid W,A)\phi(W,A)\tau_{S,\beta_P}(W,A)\bigg]\\
&+P_0\bigg[(\Pi_0-\Pi_P)(1\mid W,A)\phi(W,A)\Pi_0(1\mid W,A)(\tau_{S,\beta_P}-\tau_{S,\beta_0})(W,A)\bigg].
\end{align*}
Notice that $\mathbf{A}$ already has second-order term $(\Pi_P-\Pi_0)(\theta_P-\theta_0)$. The first term of $\mathbf{B}$ has second-order term $(\Pi_P-\Pi_0)(\tau_{S,\beta_P}-\tau_{S,\beta_0})$. Let $\mathbf{B}_{2}$ be the second term of $\mathbf{B}$, i.e.,
$$
\mathbf{B}_{2}\equiv P_0\left[\Pi_P(1-\Pi_P)(1\mid W,A)\phi(W,A)(\tau_{S,\beta_0}-\tau_{S,\beta_P})(W,A)\right].
$$
Since $I_{P}=E_P\Pi_P(1-\Pi_P)(1\mid W,A)\phi\phi^\top(W,A)$, we have
$$
\sum_jE_P\Pi_P(0\mid W,0)\phi_j(W,0)\left(I_{P}^{-1}\mathbf{B}_{2}\right)_j=P_0\left[\Pi_P(0\mid W,0)\tau_{S,\beta_0}(W,0)\right]-P_0\left[\Pi_P(0\mid W,0)\tau_{S,\beta_P}(W,0)\right].
$$
Note that $\mathbf{C}$ also has second-order terms $(\Pi_P-\Pi_0)^2$ and $(\Pi_P-\Pi_0)(\tau_{S,\beta_P}-\tau_{S,\beta_0})$. In addition, note that
\begin{align*}
&P_0\left[\Pi_P(0\mid W,0)\tau_{S,\beta_0}(W,0)\right]-P_0\left[\Pi_P(0\mid W,0)\tau_{S,\beta_P}(W,0)\right]\\
&-P_0\left[(\tau_{S,\beta_{P}}-\tau_{S,\beta_0})(W,0)(\Pi_P-\Pi_0)(0\mid W,0)\right]-P_0\left[\tau_{S,\beta_0}(W,0)(\Pi_P-\Pi_0)(0\mid W,0)\right]\\
&=P_0\left[\Pi_0(0\mid W,0)\tau_{S,\beta_0}(W,0)\right]-P_0\left[\Pi_P(0\mid W,0)\tau_{S,\beta_P}(W,0)\right]\\
&-P_0\left[(\tau_{S,\beta_P}-\tau_{S,\beta_0})(W,0)(\Pi_P-\Pi_0)(0\mid W,0)\right].
\end{align*}
Similarly,
\begin{align*}
&-P_0\left[\Pi_P(0\mid W,1)\tau_{S,\beta_0}(W,1)\right]+P_0\left[\Pi_P(0\mid W,1)\tau_{S,\beta_P}(W,1)\right]\\
&+P_0\left[(\tau_{S,\beta_P}-\tau_{S,\beta_0})(W,1)(\Pi_P-\Pi_0)(0\mid W,1)\right]+P_0\left[\tau_{S,\beta_0}(W,1)(\Pi_P-\Pi_0)(0\mid W,1)\right]\\
&=-P_0\left[\Pi_0(0\mid W,1)\tau_{S,\beta_0}(W,1)\right]+P_0\left[\Pi_P(0\mid W,1)\tau_{S,\beta_P}(W,1)\right]\\
&+P_0\left[(\tau_{S,\beta_P}-\tau_{S,\beta_0})(W,1)(\Pi_P-\Pi_0)(0\mid W,1)\right].
\end{align*}
To summarize, the exact remainder of $R^\#_{\mathcal{M}_{S,w}}(P,P_0)$ is
\begin{align*}
R^\#_{\mathcal{M}_{S,w}}(P,P_0)=P_0\bigg\{&\frac{g_P-g_0}{g_P}(1\mid W)\tau_{S,\beta_{P}}(W,1)(\Pi_P-\Pi_0)(1\mid W,1)\\
-&\frac{g_P-g_0}{g_P}(0\mid W)\tau_{S,\beta_P}(W,0)(\Pi_P-\Pi_0)(1\mid W,0)\\
-&(\tau_{S,\beta_P}-\tau_{S,\beta_0})(W,0)(\Pi_P-\Pi_0)(0\mid W,0)\\
+&(\tau_{S,\beta_P}-\tau_{S,\beta_0})(W,1)(\Pi_P-\Pi_0)(0\mid W,1)\\
+&\sum_jE_P[\Pi_P(0\mid W,0)\phi_j(W,0)-\Pi_P(0\mid W,1)\phi_j(W,1)]\bigg[I_{P}^{-1}\cdot\\
\big[&(\Pi_P-\Pi_0)(1\mid W,A)\phi(W,A)(\bar{Q}_P-\bar{Q}_0)(W,A)\\
+&(\Pi_P-\Pi_0)(1-\Pi_P)(1\mid W,A)\phi(W,A)(\tau_{S,\beta_P}-\tau_{S,\beta_0})(W,A)\\
-&(\Pi_P-\Pi_0)^2(1\mid W,A)\phi(W,A)\tau_{S,\beta_P}(W,A)\\
-&(\Pi_P-\Pi_0)(1\mid W,A)\phi(W,A)\Pi_0(1\mid W,A)(\tau_{S,\beta_P}-\tau_{S,\beta_0})(W,A)\big]\bigg]_j\bigg\}.
\end{align*}
In particular, note that if $\Pi_P=\Pi_0$, then the exact remainder $R^\#_{\mathcal{M}_{S,w}}(P,P_0)=0$, even if $\bar{Q}$ and $\tau_{S}$ are misspecified.
\end{proof}

\section{Details of the data-generating process for the oracle-bias evaluation}
\label{app:oracle_bias_dgp}

The numerical results in Table \ref{tab:oracle_bias} were obtained from a large-sample oracle calculation designed to assess the magnitude of the oracle bias induced by projecting the true conditional RCT-enrollment effect onto a simple working model. In this experiment, the true bias function was supplied directly, and we evaluated how much discrepancy remained after projecting this truth onto a misspecified working model. The sample size was set to $n=10^6$ so that Monte Carlo error would be negligible.

Let $\operatorname{expit}(x)=1/(1+\exp(-x))$. For each observation, we generated independent baseline variables and an outcome error $W_1,W_2,W_3,W_4,U_Y \sim \mathcal{N}(0,1)$. The working model used in the oracle projection included only $W_1,W_2,W_3$ and $A$. The fourth covariate, $W_4$, was generated but omitted from the working model. Therefore, in scenarios where the true bias depended on $W_4$, this represented an unobserved source of bias from the perspective of the working model. The RCT indicator was generated according to $S \sim \operatorname{Bern}\{p_S(W)\}$, where $p_S(W)=\operatorname{expit}(-0.5W_1-0.8)$. Treatment assignment was randomized in the RCT with probability $P_0(A=1\mid S=1,W)=g_{\rm rct}=0.5$, whereas treatment assignment in the external data depended on observed covariates: $P_0(A=1\mid S=0,W)=\operatorname{expit}(0.5W_1+0.3W_2)$. Thus, the treatment mechanism was known and randomized in the RCT, but confounded by measured covariates in the external data.

The true treatment effect was set to $0.5$. The oracle-bias calculation focused on the conditional average RCT-enrollment effect
$$
\tau_{S,0}(W,A)=E_0(Y\mid S=1,W,A)-E_0(Y\mid S=0,W,A),
$$
which is the bias function appearing in the bias estimand $\Psi^\#(P_0)$. We considered three classes of bias functions, indexed by $\rho\in\{0.1,0.2,\ldots,1.0\}$:
\begin{align*}
\text{Unobserved:}\qquad
\tau_{S,0}(W,A)
&=-(0.5+\rho W_4)(1-A),\\
\text{Misspecified:}\qquad
\tau_{S,0}(W,A)
&=-(0.2+0.1\rho W_2^3)(1-A),\\
\text{Unobserved + misspecified:}\qquad
\tau_{S,0}(W,A)
&=-(0.5+0.1\rho W_1^3+\rho W_4)(1-A).
\end{align*}
The first setting creates bias through the omitted variable $W_4$. The second setting creates bias through a nonlinear function of an observed covariate, while the oracle working model contains only linear main terms. The third setting combines both forms of misspecification. Larger values of $\rho$ correspond to increasingly complex or increasingly difficult-to-approximate bias functions. The rows of Table \ref{tab:oracle_bias} correspond, from top to bottom, to $\rho=0.1,\ldots,1.0$.

For each simulated data set, we fit the oracle weighted least-squares projection
$$
\widehat{\tau}_{S,\mathrm{or}}(W,A)=\widehat{\beta}_0+\widehat{\beta}_1W_1+\widehat{\beta}_2W_2+\widehat{\beta}_3W_3+\widehat{\beta}_4A,
$$
using the known true bias function $\tau_{S,0}(W,A)$ as the outcome. The projection was weighted by $\pi_1(W)\{1-\pi_1(W)\}$, where
$$
\Pi_0(1\mid W,A)=\frac{P_0(A\mid S=1,W)p_S(W)}{P_0(A\mid S=1,W)p_S(W)+P_0(A\mid S=0,W)\{1-p_S(W)\}}.
$$
This projection represents an oracle version of the bias working model. It is given direct access to the true $\tau_{S,0}$, but is still constrained to use the same low-dimensional linear working model. Therefore, any remaining discrepancy reflects approximation error due to the working model rather than estimation error in $\tau_{S,0}$.

For each scenario, we reported the adjusted $R^2$ of this oracle weighted regression, the weighted mean squared error
$$
\frac{1}{n}\sum_{i=1}^n\pi_1(W_i)\{1-\pi_1(W_i)\}\left[\widehat{\tau}_{S,\mathrm{or}}(W_i,A_i)-\tau_{S,0}(W_i,A_i)\right]^2,
$$
and the empirical oracle-bias measure
$$
\frac{1}{n}\sum_{i=1}^n\{S_i-\pi_1(W_i)\}\left[\left\{\widehat{\tau}_{S,\mathrm{or}}(W_i,1)-\tau_{S,0}(W_i,1)\right\}-\left\{\widehat{\tau}_{S,\mathrm{or}}(W_i,0)-\tau_{S,0}(W_i,0)\right\}\right].
$$
This final quantity measures the residual contribution of the projection error to the treatment-contrast scale after the true bias function is approximated by the oracle working model. Across the scenarios in Table \ref{tab:oracle_bias}, the oracle bias remained small relative to the true treatment effect of $0.5$, even when the adjusted $R^2$ decreased substantially as $\rho$ increased.

\section{Data-generating processes and additional simulations}\label{app:additional_sims}

In the simulations described in Section \ref{sec:simulations}, we generated RCT data and external RWD data from five different sources, with each source exhibiting varying degrees of covariate shift relative to the RCT covariate distribution. For the RCT, three baseline covariates, $W_1$, $W_2$, and $W_3$, were drawn from standard normal distributions. For the external RWD from source $j$, where $j\in\{1,2,3,4,5\}$, the covariate shift was introduced by generating $W_1$ and $W_3$ from $\mathcal{N}(0.2(j-1),1)$ and $W_2$ from $\mathcal{N}(-0.2(j-1),1)$. The treatment variable $A$ for the RCT was sampled from a binomial distribution with a probability of 0.5. For the external RWD, $A$ was drawn from a binomial distribution with probability $\text{expit}(-2+1.6W_1-2W_2)$. The outcome is defined as: $Y=2.5+0.9W_1+1.1W_2+2.7W_3+0.5A+U_Y+(1-S)b$, where $U_Y\sim\mathcal{N}(0,3)$, and the bias term $b$ was defined as:
\begin{align*}
b&= \left\{
\begin{array}{ll}
0 & \text{if } j=1; \\
0.5+1.4W_1A & \text{if } j\in\{2,3\}; \\
0.5+1.4W_1A+0.2W_3 & \text{if } j=4;\\
1.3+1.4W_1A+0.2W_3 & \text{if } j=5.
\end{array}
\right.\\
\end{align*}
\begin{table}[h!]
\centering
\resizebox{\columnwidth}{!}{
\begin{tabular}{ccccccc}
\toprule
Sample size of external arms & Strategy & $|\text{Bias}|$ & Variance & 95\% CI widths & 95\% CI coverage & Folds pooled (\%) \\
\midrule
\multirow{1}{*}{0} & RCT data only & 0.002 & 0.081 & 1.169 & 0.97 & - \\
\midrule
\multirow{2}{*}{500} 
& Random & 0.241 & 0.157 & 1.166 & 0.75 & 0.358 \\
& TES + PS matching & \textbf{0.077} & \textbf{0.075} & \textbf{1.067} & \textbf{0.95} & \textbf{0.588} \\
\midrule
\multirow{2}{*}{600} 
& Random & 0.289 & 0.203 & 1.179 & 0.72 & 0.390 \\
& TES + PS matching & \textbf{0.080} & \textbf{0.072} & \textbf{1.068} & \textbf{0.95} & \textbf{0.566} \\
\midrule
\multirow{2}{*}{700} 
& Random & 0.284 & 0.186 & 1.175 & 0.68 & 0.376 \\
& TES + PS matching & \textbf{0.089} & \textbf{0.078} & \textbf{1.081} & \textbf{0.95} & \textbf{0.514} \\
\midrule
\multirow{2}{*}{800} 
& Random & 0.270 & 0.174 & 1.177 & 0.73 & 0.354 \\
& TES + PS matching & \textbf{0.100} & \textbf{0.077} & \textbf{1.084} & \textbf{0.95} & \textbf{0.502} \\
\midrule
\multirow{2}{*}{900} 
& Random & 0.281 & 0.180 & 1.182 & 0.73 & 0.366 \\
& TES + PS matching & \textbf{0.104} & \textbf{0.075} & \textbf{1.084} & \textbf{0.95} & \textbf{0.484} \\
\midrule
\multirow{2}{*}{1000} 
& Random & 0.251 & 0.167 & 1.188 & 0.75 & 0.312 \\
& TES + PS matching & \textbf{0.103} & \textbf{0.077} & \textbf{1.096} & \textbf{0.95} & \textbf{0.460} \\
\bottomrule
\end{tabular}
}
\caption{Absolute bias, variance, average 95\% confidence interval widths, coverage, and proportion of folds that selected the pooled-ATE estimand averaged across simulation runs by external data sample size and sampling strategy. ``RCT data only" denotes the RCT-only design, analyzed with the TMLE estimator. ``Random" denotes a randomly sampled external cohort of the specified size, analyzed with the ES-CVTMLE estimator. ``TES + PS matching" denotes our proposed trial enrollment score and propensity score matching strategy for sampling the external cohort of the specified size, also analyzed with the ES-CVTMLE estimator.}
\label{tab:sim_escvtmle}
\end{table}

We also tried applying an alternative data integration estimator, ES-CVTMLE \citep{dang_escvtmle_2022}, under the same simulation setup. The results are consistent with those of A-TMLE shown in Section \ref{sec:simulations}. Specifically, as shown in Table \ref{tab:sim_escvtmle}, in challenging scenarios where ES-CVTMLE gives low confidence interval coverage under the random sampling regime due to poor estimation of the bias $\Psi^\#$, our proposed strategy brings coverage back to nominal. The ES-CVTMLE method works by splitting the data into $V$ folds. For each fold, it first determines whether to estimate the pooled-ATE estimand $\tilde{\Psi}(P_0)$ or the covariate-pooled ATE estimand $\Psi(P_0)$. This decision is based on an experiment selection criterion that optimizes the bias-variance trade-off for the target parameter, using the larger training set. The chosen target parameter is then estimated on the validation fold. This process is repeated across all $V$ folds, and the final estimate is obtained as the average across folds. The expected efficiency gain can also be inferred from the proportion of folds selecting the pooled-ATE estimand. Thus, in the last column of Table \ref{tab:sim_escvtmle}, we report the average proportion of folds that selected the pooled-ATE estimand across simulation runs. Our results indicate that the proposed matching strategy increases the proportion of folds pooling external data, suggesting that the matching helps reduce the bias in the external data. Additionally, we observe that for ES-CVTMLE, the number of folds pooling external data decreases as the sample size of the external data grows. This is expected, since as more external data is used for data integration, the estimated bias becomes larger than the standard error of the estimator so ES-CVTMLE tends to select the pooled-ATE estimand less often.

\end{document}